\documentclass[letterpaper, 10 pt, conference]{ieeeconf}  

\IEEEoverridecommandlockouts                              

\usepackage{amsmath} 
\usepackage{amssymb}  
\usepackage{graphicx}
\usepackage{xcolor}
\newtheorem{problem}{Problem}
\newtheorem{assumption}{Assumption}

\newtheorem{definition}{Definition}
\newtheorem{remark}{Remark}
\newtheorem{theorem}{Theorem}

\usepackage{cite}

\title{\LARGE \bf
Communication-aware Synthesis of Safe Controllers for Discrete-Time Linear Multi-Agent Systems with Distributed $k$-Hop Observation*
}

\author{Yihan Liu$^{1}$, Teng Yan$^{1}$, Meiqi Tian$^{1}$ and Bingzhuo Zhong$^{1}$
\thanks{*This work was supported by the National Natural Science Foundation of China (Grant No. 62606433), the Guangdong Basic and Applied Basic Research Foundation (No:2026A1515010222), and the Guangdong Provincial Project (No. 2024QN11X053). (Corresponding Author: Bingzhuo Zhong)}
\thanks{$^{1}$Yihan Liu, Teng Yan, Meiqi Tian and Bingzhuo Zhong are with the Thrust of Artificial Intelligence, The Hong Kong University of Science and Technology (Guangzhou), Guangzhou 511400, China. (e-mail:\{yliu135, tyan497, mtian837\}@connect.hkust-gz.edu.cn, bingzhuoz@hkust-gz.edu.cn).}
}
\begin{document}

\maketitle
\thispagestyle{empty}
\pagestyle{empty}

\begin{abstract}
This paper studies communication-aware safe control for discrete-time
linear multi-agent systems under limited information exchange. The main
challenge lies in the coupling between remote-state estimation and safe
controller design, since estimation errors affect the state evolution
through the controller gains, while the controller design must account
for the resulting observer-induced state perturbations to guarantee
safety. To address this challenge, a distributed $k$-hop observer is
developed to reconstruct unavailable remote states, and a uniform
observer-error bound is derived. The effect of the observer-induced state perturbation on closed-loop safety is accounted for in both the construction of local $\varepsilon_i$-robust safe invariant (RSI) sets and the enforcement of pairwise relative-state safety constraints.
The resulting safety conditions ensure that all agents remain within their local RSI sets while all pairwise relative-state safety constraints are satisfied. A linear matrix inequality (LMI)-based optimization
method is developed to jointly synthesize the distributed observers,
local controllers, and RSI sets. A case study illustrates the
effectiveness of the proposed method.
\end{abstract}

\section{Introduction}

Multi-agent systems have been widely used in networked robotics, intelligent transportation, and other cyber-physical applications, where multiple agents are required to cooperate safely. 
In these applications, large-scale multi-agent systems are deployed over communication networks with limited bandwidth, sparse interconnections, and distributed decision-making architectures. Under these settings, the design of control strategies must account for communication constraints, making communication-aware control important in networked systems \cite{Mesbahi2010,Oks2024,Gielis,Ge}.

To cope with the communication limitations on inter-agent information exchange, distributed estimation and observer design have been extensively studied in networked control systems. Representative works include decentralized state estimation for distributed discrete-time systems \cite{Acikmese2014}, pinning-based observer design for networked dynamical systems \cite{EdwardsMenon2016}, distributed output regulation for heterogeneous multi-agent systems \cite{Yang2016}, and consensus-based observer design for multi-agent systems \cite{Zuo2020}. Recently, \cite{GasparriMarino2020} developed a distributed framework for $k$-hop control in large-scale networks, and \cite{Zaccherini2025} further proposed a communication-aware strategy based on $k$-hop distributed observers. These studies show that communication-aware estimation can effectively enlarge the information available for local control. However, existing methods mainly focus on estimation convergence and closed-loop stability, while the impact of observer-induced perturbations on safety guarantees has received less attention.

In parallel, safe controller synthesis has been extensively
studied using invariant-set and robust safety methods
\cite{Blanchini1999,Hsu,Yin}. Recently,
\cite{Zhong2022} developed a direct data-driven method for synthesizing
safety controllers for uncertain linear systems, \cite{Zhong2024}
proposed a set-based approach for formal controller synthesis, and
\cite{Zhong2023} presented a secure-by-construction framework
that enforces safety and security properties.
A robust safe invariant (RSI) synthesis framework was developed in
\cite{lyh} by incorporating communication-induced estimation errors into
controller design. But in multi-agent systems, such estimation errors can also arise from the reconstruction of remote-agent states that are unavailable through direct communication. Under limited inter-agent communication, where each agent can directly exchange information only with its 1-hop neighbors, the remote states required by the local feedback controller are not directly available and must therefore be reconstructed by distributed observers. The resulting estimation errors induced by observers affect the state evolution through the controller gains, thereby coupling the observer and controller designs. 

This paper develops a communication-aware framework for safe controller
synthesis of multi-agent systems with distributed $k$-hop observation,
where direct information exchange is restricted to 1-hop neighbors and
the states of agents 2 to $k$ hops away are reconstructed by distributed
observers. The resulting estimation errors are propagated through the
controller gains to characterize the observer-induced state perturbation,
whose effect is incorporated into local robust safe invariant
(RSI) sets and pairwise relative-state safety constraints. An LMI-based
framework is then developed to jointly synthesize the distributed
observers and safe controllers while enforcing local RSI, pairwise
relative-state safety, and input constraints. The main contributions of
this paper are summarized as follows.

\begin{enumerate}
    \item We construct a distributed $k$-hop observer for discrete-time
    linear multi-agent systems, 
    and establish the observer-error dynamics and a
    uniform error bound.

    \item We characterize the observer-induced state perturbation caused by
    remote-state estimation and incorporate its effect on closed-loop safety into the synthesis conditions.

    \item We develop an LMI-based optimization framework to jointly
    synthesize the observers and controllers while accounting for their coupling and ensuring local state invariance, relative-state safety, and input
    constraints.
\end{enumerate}

\section{PRELIMINARIES AND PROBLEM FORMULATION}

Let $\mathbb R$, $\mathbb R_{>0}$ and $\mathbb N$ denote the real, positive real and natural numbers, respectively. The set of $n \times m$ matrices is written as $\mathbb R^{n \times m}$, and
$[A]_{ab}$ denotes the $(a,b)$-th entry of a matrix $A$. For a symmetric matrix $X$, $X\succ0$ ($X\succeq0$) denotes positive (semi-)definiteness, and $I_n$ is the $n$-dimensional identity matrix. The Euclidean and infinity norms are denoted by $\|\cdot\|$ and $\|\cdot\|_\infty$. Moreover, $\otimes$ and $\times$ denote the Kronecker product and the Cartesian product, respectively. For a scalar \(a\), \(|a|\) denotes its absolute value, whereas for a finite set \(\mathcal S\), \(|\mathcal S|\) denotes its cardinality.

\subsection{Graph-Theoretic Preliminaries}

Consider a multi-agent system composed of $N$ agents indexed by $
\mathcal V=\{1,2,\dots,N\}.
$
The communication topology is modeled by an undirected graph
$
\mathcal G=(\mathcal V,\mathcal E),
$
where $\mathcal V$ is the node set and
$\mathcal E\subseteq\mathcal V\times\mathcal V$ is the edge set.
An edge $(i,j)\in\mathcal E$ indicates that agents $i$ and $j$ can
directly exchange information. For each agent $i\in\mathcal V$, its
1-hop neighbor set is defined as
\begin{equation}
    \label{eq:Ni}
    \mathcal N_i:=\{j\in\mathcal V\mid(i,j)\in\mathcal E\}.
\end{equation}
For any two agents $i,j\in\mathcal V$, a path from $i$ to $j$ is a
finite sequence of agents
$
(v_0,v_1,\ldots,v_\ell)
$
such that $v_0=i$, $v_\ell=j$, and
$(v_{r-1},v_r)\in\mathcal E$ for all $r=1,\ldots,\ell$.
The length of the path is the number of edges $\ell$.
The graph distance $d(i,j)$ is defined as the minimum path length
among all paths connecting agents $i$ and $j$. If no such path exists, let $d(i,j):=\infty$.
Given an integer $k\ge2$, the set of agents whose graph
distance from agent $i$ is between $2$ and $k$ is defined as
\begin{equation}
\mathcal N_i^{k\text{-hop}}
:=
\{j\in\mathcal V\mid 2\le d(i,j)\le k\}.
\label{eq:khop_neighbors}
\end{equation}
Let
$
\eta_i:=|\mathcal N_i^{k\text{-hop}}|
$
denote the number of such agents. Agents in
$\mathcal N_i^{k\text{-hop}}$ are referred to as the remote agents of
agent $i$. The integer $k$ is a prescribed parameter throughout
this paper.

\begin{assumption}
At each time instant, each agent $i\in\mathcal V$ has direct access to
its own state and input. Moreover,
it can transmit the locally available state and input to its 1-hop neighbors. The
communication delay within each sampling interval is assumed to be
negligible.
\label{as:local_information}
\end{assumption}

Assumption~\ref{as:local_information} is standard in distributed
multi-agent systems~\cite{Zaccherini2025}. In the considered communication
setting, all transmissions are restricted to the edges of $\mathcal G$.
Information received from an adjacent agent may be forwarded to another
1-hop neighbor within the same sampling interval and used in the
distributed observer correction.

\subsection{Interconnected Multi-Agent System}

Consider the interconnected discrete-time linear multi-agent system associated with $\mathcal G$. Define the overall state and
control input as
$
x(t):=
\begin{bmatrix}
x_1^\top(t)&\cdots&x_N^\top(t)
\end{bmatrix}^\top
\in\mathbb R^{Nn_x},
$
and
$
u(t):=
\begin{bmatrix}
u_1^\top(t)&\cdots&u_N^\top(t)
\end{bmatrix}^\top
\in\mathbb R^{Nn_u},
$
where, for each $i\in\mathcal V$, $x_i(t)\in\mathbb R^{n_x}$ and
$u_i(t)\in\mathbb R^{n_u}$ denote the state and control input
associated with agent $i$, respectively.
The interconnected multi-agent system evolves as
\begin{equation}
x(t+1)=Ax(t)+Bu(t),
\quad \forall t\in\mathbb N,
\label{eq:overall_dynamics}
\end{equation}
where
$
A:=\operatorname{blkdiag}(A_1,\ldots,A_N)
\in\mathbb R^{Nn_x\times Nn_x},
$
and
$
B:=\operatorname{blkdiag}(B_1,\ldots,B_N)
\in\mathbb R^{Nn_x\times Nn_u},
$
with $A_i\in\mathbb R^{n_x\times n_x}$ and
$B_i\in\mathbb R^{n_x\times n_u}$ denoting the system and input
matrices associated with agent $i$.
The admissible input set of the interconnected system is defined as
\begin{equation}
\mathcal U
:=
\left\{
u\in\mathbb R^{Nn_u}
\middle|
\|u_i\|_\infty\le u_{\max}^i,\ 
\forall i\in\mathcal V
\right\},
\label{eq:overall_input_set}
\end{equation}
where $u_{\max}^i\in\mathbb R_{>0}$ denotes the prescribed input bound
associated with agent $i$. Thus, the control input is required to satisfy
$u(t)\in\mathcal U$ for all $t\in\mathbb N$. 

For each agent $i\in\mathcal V$, define
$
\mathcal R_i
:=
\mathcal N_i\cup\mathcal N_i^{k\text{-hop}},
$
which contains its 1-hop neighbors and the remote agents located
2 to $k$ hops away. The safe set of the
interconnected system is defined as
\begin{equation}
\mathcal X
:=
\left\{
x\in\mathbb R^{Nn_x}
\middle|
\begin{array}{ll}
C_i x_i\le\mathbf 1,
&
\forall i\in\mathcal V,
\\
|d_i(x_i-x_p)|\le b_p^i,
&
\forall i\in\mathcal V,\ 
\forall p\in\mathcal R_i
\end{array}
\right\},
\label{eq:overall_safe_set1}
\end{equation}
where
$C_i\in\mathbb R^{n_{c_i}\times n_x}$ specifies the state constraints
associated with agent $i$,
$\mathbf 1\in\mathbb R^{n_{c_i}}$ is the vector of ones,
$d_i\in\mathbb R^{1\times n_x}$ specifies the direction of the
pairwise relative-state constraint, and
$b_p^i\in\mathbb R_{>0}$ denotes the corresponding relative-state safety bound. The safety requirement imposed on the interconnected system throughout
this paper is therefore $x(t)\in\mathcal X$ for all $t\in\mathbb N$.

The interconnected system in \eqref{eq:overall_dynamics} consists of
$N$ subsystems. The dynamics of subsystem $i\in\mathcal V$ are given by
\begin{equation}
x_i(t+1)=A_i x_i(t)+B_i u_i(t),
\quad \forall i\in\mathcal V,\quad \forall t\in\mathbb N.
\label{eq:agent_dynamics}
\end{equation}
For each subsystem $i\in\mathcal V$, define its state safe set as
\begin{equation}
\mathcal X^i
:=
\left\{
x_i\in\mathbb R^{n_x}
\mid
C_i x_i\le\mathbf 1
\right\}.
\label{eq:safe_set}
\end{equation}
Moreover, for each $p\in\mathcal R_i$, define the relative state as
$
y_p^i(t):=x_i(t)-x_p(t),
$
and the corresponding safe set for the relative state as
\begin{equation}
\mathcal S_p^i
:=
\left\{
y_p^i\in\mathbb R^{n_x}
\mid
|d_i y_p^i|\le b_p^i
\right\}.
\label{eq:relative_safe_set}
\end{equation}
Accordingly, the safety requirement
$x(t)\in\mathcal X$ for all $t\in\mathbb N$ is equivalent to requiring
$x_i(t)\in\mathcal X^i$ and
$y_p^i(t)\in\mathcal S_p^i$
for all $i\in\mathcal V$, $p\in\mathcal R_i$, and $t\in\mathbb N$.

\subsection{Observer System}

For each agent $i\in\mathcal V$, write its remote-agent set as
$
\mathcal N_i^{k\text{-hop}}
=
\{j_1,\ldots,j_{\eta_i}\},
$
where $\mathcal N_i^{k\text{-hop}}$ is defined in
\eqref{eq:khop_neighbors}.
The states of these remote agents are required by the local
controller of agent $i$ to incorporate $k$-hop state information.
Define the corresponding stacked remote-state vector as
\begin{equation}
x_i^o(t):=
\begin{bmatrix}
x_{j_1}^\top(t)&\cdots&x_{j_{\eta_i}}^\top(t)
\end{bmatrix}^\top
\in\mathbb R^{\eta_i n_x}.
\label{eq:observer_state}
\end{equation}
Since the states in $x_i^o(t)$ are not directly available to agent $i$
through 1-hop communication, a distributed $k$-hop observer is employed
to reconstruct them. For each
$j\in\mathcal N_i^{k\text{-hop}}$, let
$\hat x_j^i(t)\in\mathbb R^{n_x}$ denote the estimate of $x_j(t)$
maintained by agent $i$. The corresponding stacked estimate is defined as
\begin{equation}
\hat x_i^o(t):=
\begin{bmatrix}
(\hat x_{j_1}^{i}(t))^\top&\cdots&
(\hat x_{j_{\eta_i}}^{i}(t))^\top
\end{bmatrix}^\top
\in\mathbb R^{\eta_i n_x}.
\label{eq:estimated_remote_state}
\end{equation}
Similarly, define
$
\hat u_i^o(t)
=
[(\hat u_{j_1}^i(t))^\top\ \cdots\
(\hat u_{j_{\eta_i}}^i(t))^\top]^\top
\in\mathbb R^{\eta_i n_u},
$
where $\hat u_j^i(t)\in\mathbb R^{n_u}$ denotes the estimate of
$u_j(t)$ maintained by agent $i$. These input estimates are generated
by the input-estimation scheme.
The distributed $k$-hop observer is represented as
\begin{equation}
\hat x_i^o(t+1)
=
\Phi_i^o
\left(
\hat x_i^o(t),
\hat u_i^o(t),
\mathcal I_i(t)
\right),
\label{eq:abstract_observer}
\end{equation}
where $\mathcal I_i(t)$ denotes the locally available information of
agent $i$ under Assumption~\ref{as:local_information}, and
$\Phi_i^o(\cdot)$ denotes the distributed observer update map.
The specific observer construction and input-estimation scheme are
presented in Section~3.

\begin{figure*}[!t]
    \centering
    \includegraphics[width=0.9\textwidth]{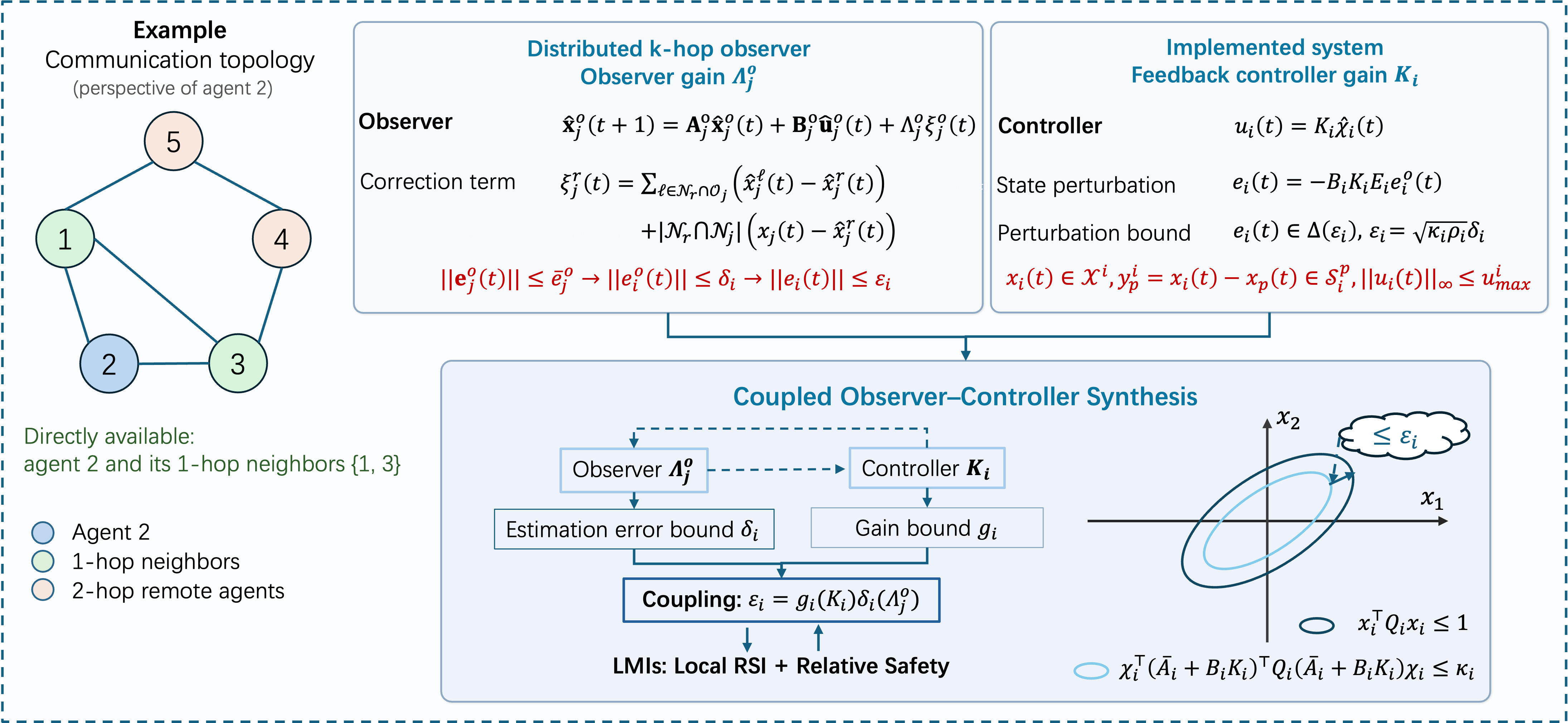}
    \caption{Coupled observer–controller synthesis for communication-aware safe control with distributed \(k\)-hop observation.}
    \label{fig:agent2_framework}
\end{figure*}

\subsection{Control System}

For each agent $i\in\mathcal V$, a distributed state-feedback controller
is implemented using its locally available and estimated $k$-hop state
information. Define the full-information vector as
$
\chi_i(t):=
\begin{bmatrix}
x_i^\top(t)&
x_{\mathcal N_i}^\top(t)&
x_i^{o\top}(t)
\end{bmatrix}^\top
\in\mathbb R^{n_{\chi_i}},
$
where
$
n_{\chi_i}:=(1+|\mathcal N_i|+\eta_i)n_x,
$
and
$x_{\mathcal N_i}(t)\in\mathbb R^{|\mathcal N_i|n_x}$
stacks the states associated with the 1-hop neighbor set
$\mathcal N_i$ defined in \eqref{eq:Ni}.
Since the states in $x_i^o(t)$ are not directly available to agent $i$,
their estimates in \eqref{eq:estimated_remote_state} are used to construct
the information vector available to the local controller as
\begin{equation}
\hat\chi_i(t):=
\begin{bmatrix}
x_i^\top(t)&
x_{\mathcal N_i}^\top(t)&
\hat x_i^{o\top}(t)
\end{bmatrix}^\top
\in\mathbb R^{n_{\chi_i}}.
\label{eq:implemented_info}
\end{equation}
The distributed state-feedback controller of agent $i$ is given by
\begin{equation}
u_i(t):=K_i\hat\chi_i(t),
\quad
\forall i\in\mathcal V,\quad
\forall t\in\mathbb N,
\label{eq:e_feedback_law}
\end{equation}
where
$
K_i\in\mathbb R^{n_u\times n_{\chi_i}}
$
is the state-feedback gain to be jointly synthesized in Section~4.

\subsection{Problem Formulation}

The observer and controller designs are coupled. The local
controller relies on the remote-state estimates $\hat x_i^o(t)$ through
$\hat\chi_i(t)$, while the corresponding estimation errors affect the
closed-loop state evolution through the feedback gain $K_i$. Therefore,
the observer performance and controller gains must be jointly considered
in the safety synthesis. Fig.~\ref{fig:agent2_framework} summarizes the proposed framework for
the joint observer-controller synthesis. The detailed observer design and the joint synthesis of the observer
and state-feedback controllers satisfying the prescribed safety
constraints are presented in Sections~3 and~4, respectively.

\begin{problem}
\label{problem}
Consider the interconnected multi-agent system in
\eqref{eq:overall_dynamics} under Assumption~\ref{as:local_information},
with the admissible input set $\mathcal U$ defined in
\eqref{eq:overall_input_set} and the safe set $\mathcal X$
defined in \eqref{eq:overall_safe_set1}. Jointly synthesize the distributed
$k$-hop observers in \eqref{eq:abstract_observer}, the distributed
state-feedback controllers in \eqref{eq:e_feedback_law}, and a
safe invariant set $\mathcal X_{\mathrm{safe}}\subseteq\mathcal X$ such
that, for every $x(0)\in\mathcal X_{\mathrm{safe}}$, the resulting
closed-loop system satisfies $x(t)\in\mathcal X_{\mathrm{safe}}$ and
$u(t)\in\mathcal U$ for all $t\in\mathbb N$.
\end{problem}

The condition $x(t)\in\mathcal X_{\mathrm{safe}}$ is equivalent to
requiring $x_i(t)\in\mathcal X_{\mathrm{safe}}^i$ and
$y_p^i(t)\in\mathcal S_p^i$ for all $i\in\mathcal V$ and
$p\in\mathcal R_i$, where
$\mathcal X_{\mathrm{safe}}^i\subseteq\mathcal X^i$
denotes the local safe invariant set associated with agent $i$.

\section{DISTRIBUTED $k$-HOP OBSERVER AND ERROR ANALYSIS}

Building on the subsystem dynamics in~\eqref{eq:agent_dynamics}, this
section develops the distributed $k$-hop observer represented in~\eqref{eq:abstract_observer}. We then characterize the
evolution of the estimation errors and the resulting observer-induced
state perturbation.

\subsection{Observer Construction}

To characterize the agents involved in estimating $x_j(t)$ and the
information exchange among them, we introduce the following sets and
matrices.

\begin{definition}\label{def:observer_information_structure}
For each agent $j\in\mathcal V$ with $\eta_j>0$, define
$
\mathcal O_j
:=
\{r\in\mathcal V\mid j\in\mathcal N_r^{k\text{-hop}}\}.
$
The set $\mathcal O_j$ collects all agents that maintain an estimate of
$x_j(t)$. Since $\mathcal G$ is undirected,
$
\mathcal O_j=\mathcal N_j^{k\text{-hop}},
$
and hence $|\mathcal O_j|=\eta_j$. Let
$
\mathcal O_j=\{r_1,\ldots,r_{\eta_j}\}.
$

Let
$D_j\in\mathbb R^{\eta_j\times\eta_j}$
be the Laplacian matrix of the subgraph induced by $\mathcal O_j$, with
\[
[D_j]_{ab}
:=
\begin{cases}
|\mathcal N_{r_a}\cap\mathcal O_j|,
& a=b,\\
-1,
& a\neq b,\ (r_a,r_b)\in\mathcal E,\\
0,
& \text{otherwise},
\end{cases}
\]
for $a,b\in\{1,\ldots,\eta_j\}$, where $\mathcal N_{r_a}$ is defined
in~\eqref{eq:Ni}. $D_j$ characterizes the estimate-consensus coupling among
the agents maintaining estimates of $x_j(t)$. Define
$
H_j:=
\operatorname{diag}\bigl(
|\mathcal N_{r_1}\cap\mathcal N_j|,
\ldots,
|\mathcal N_{r_{\eta_j}}\cap\mathcal N_j|
\bigr)
\in\mathbb R^{\eta_j\times\eta_j},
$
where $H_j$ characterizes the true-state information available through
common 1-hop neighbors. Define
\begin{equation}
M_j:=D_j+H_j
\in\mathbb R^{\eta_j\times\eta_j},
\label{eq:Mkc}
\end{equation}
which combines the estimate-consensus coupling and the true-state
pinning effect. Its lifted form is defined as
$
\bar M_j:=M_j\otimes I_{n_x}
\in\mathbb R^{\eta_j n_x\times\eta_j n_x},
$
which represents the same information structure in the stacked
observer-error dynamics.
\end{definition}

This definition provides a compact graph-based representation of how the distributed observers associated with $x_j(t)$ exchange and incorporate information. It captures both the interactions among the agents that estimate $x_j(t)$ and the correction information that drives their estimates toward the true state. 
Based on the information structure introduced in
Definition~\ref{def:observer_information_structure}, we impose the
following connectivity and pinning condition to ensure that sufficient
correction information is available to all agents estimating $x_j(t)$.

\begin{assumption}
\label{ass:graph_condition}
Based on Definition~\ref{def:observer_information_structure}, for each agent $j\in\mathcal V$ with $\eta_j>0$, the following graph
conditions hold:

1. the communication graph $\mathcal G$ is connected, i.e., there exists a path between every pair of distinct agents in $\mathcal V$;

2. for every connected component of the subgraph induced by
$\mathcal O_j$, there exists at least one node whose corresponding
diagonal entry in $H_j$ is strictly positive.
\end{assumption}

Assumption~\ref{ass:graph_condition} ensures that each connected
component of agents estimating $x_j(t)$ contains at least one agent with
access to true-state correction information. Hence, this correction can
influence all estimators in the component through their consensus
interactions, providing the graph-theoretic basis for observer-error
convergence.
To translate this graph-theoretic condition into a matrix property
suitable for the observer analysis, the following theorem
establishes the positive definiteness of $M_j$.

\begin{theorem}
\label{M_kc}
Based on Definition~\ref{def:observer_information_structure} and
Assumption~\ref{ass:graph_condition}, for each agent
$j\in\mathcal V$ with $\eta_j>0$, the matrix $M_j$ defined in
\eqref{eq:Mkc} is positive definite. Consequently,
$
\bar M_j\succ0.
$
\end{theorem}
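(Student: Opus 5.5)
The plan is to establish that $M_j = D_j + H_j$ is a grounded (pinned) Laplacian and to show $x^\top M_j x > 0$ for every nonzero $x\in\mathbb R^{\eta_j}$ by a quadratic-form argument. First I verify that $D_j$, as given in Definition~\ref{def:observer_information_structure}, is exactly the graph Laplacian of the subgraph $\mathcal G_j$ induced by $\mathcal O_j$. The diagonal entry $|\mathcal N_{r_a}\cap\mathcal O_j|$ is the degree of $r_a$ inside $\mathcal G_j$, and the off-diagonal entries are $-1$ on edges. Hence
\[
x^\top D_j x=\sum_{(r_a,r_b)\in\mathcal E,\ a<b}(x_a-x_b)^2\ge 0 .
\]
Since $H_j$ is diagonal with nonnegative entries $h_a:=|\mathcal N_{r_a}\cap\mathcal N_j|$, I obtain
\[
x^\top M_j x=\sum_{(r_a,r_b)\in\mathcal E,\ a<b}(x_a-x_b)^2+\sum_{a=1}^{\eta_j}h_a x_a^2\ge 0,
\]
so $M_j\succeq 0$, and $M_j$ is symmetric because $\mathcal G$ is undirected.

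For strict definiteness, suppose $x^\top M_j x=0$. Every term in the sum is nonnegative, so each must vanish. From the first sum, $x_a=x_b$ whenever $r_a,r_b$ are adjacent in $\mathcal G_j$, and hence $x$ is constant on each connected component of $\mathcal G_j$. By item 2 of Assumption~\ref{ass:graph_condition}, each component contains some index $a$ with $h_a>0$, and then $h_a x_a^2=0$ forces $x_a=0$. So $x$ vanishes on every component, which gives $x=0$. Therefore $M_j\succ 0$.

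The lifted statement follows from the Kronecker eigenvalue property. The eigenvalues of $M_j\otimes I_{n_x}$ are those of $M_j$, each repeated $n_x$ times, and the Kronecker product preserves symmetry. Hence $\bar M_j\succ 0$.

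The only delicate step is the bookkeeping that identifies $D_j$'s diagonal with the induced-subgraph degree, so that the Laplacian quadratic-form identity holds exactly. Once that identification is checked, the component-wise argument is routine. Item 1 of Assumption~\ref{ass:graph_condition}, connectivity of $\mathcal G$, is not needed for this particular result.
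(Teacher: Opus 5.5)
Your proof is correct and follows the paper's route: both $D_j$ (a Laplacian) and $H_j$ (diagonal with nonnegative entries) are positive semidefinite, so a vanishing quadratic form forces the vector to be constant on each connected component of the subgraph induced by $\mathcal O_j$ and then zero by the pinning condition, and $\bar M_j\succ0$ follows from the Kronecker structure. Your explicit sum-of-squares expansion adds detail the paper leaves implicit, and your remark that connectivity of $\mathcal G$ is not needed for this result is accurate.
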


The proof is presented in the Appendix. Theorem~\ref{M_kc} shows that the observer network contains
sufficient consensus and true-state correction information to avoid
uncorrected estimation modes. 
Based on the observer information structure in
Definition~\ref{def:observer_information_structure}, we next define the distributed correction terms
used by the state and input estimators.

\begin{definition}\label{2}
For each agent $j\in\mathcal V$ with $\eta_j>0$, let
$\mathcal O_j=\{r_1,\ldots,r_{\eta_j}\}$ as defined in
Definition~\ref{def:observer_information_structure}.
For each $r\in\mathcal O_j$, let
$\hat x_j^{r}(t)\in\mathbb R^{n_x}$ and
$\hat u_j^{r}(t)\in\mathbb R^{n_u}$
denote the estimates of $x_j(t)$ and $u_j(t)$ maintained by agent $r$,
respectively.

The local state correction
term is defined as
\begin{equation}
\xi_j^{r}(t)
:=
\sum_{\ell\in\mathcal N_{r}\cap\mathcal O_j}
\bigl(\hat x_j^\ell(t)-\hat x_j^{r}(t)\bigr)
+
|\mathcal N_{r}\cap\mathcal N_j|
\bigl(x_j(t)-\hat x_j^{r}(t)\bigr),
\label{eq:implementable_innovation}
\end{equation}
with the corresponding stacked correction vector is
$
\xi_j^o(t):=
\begin{bmatrix}
(\xi_j^{r_1}(t))^\top&\cdots&
(\xi_j^{r_{\eta_j}}(t))^\top
\end{bmatrix}^\top.$
Similarly, define the local input correction term as
\begin{equation}
\xi_{u,j}^{r}(t)
:=
\sum_{\ell\in\mathcal N_{r}\cap\mathcal O_j}
\bigl(\hat u_j^\ell(t)-\hat u_j^{r}(t)\bigr)
+
|\mathcal N_r\cap\mathcal N_j|
\bigl(u_j(t)-\hat u_j^{r}(t)\bigr),
\label{eq:input_observer_correction}
\end{equation}
with the corresponding stacked correction vector
$
\xi_{u,j}^o(t):=
[(\xi_{u,j}^{r_1}(t))^\top\ \cdots\
(\xi_{u,j}^{r_{\eta_j}}(t))^\top]^\top.
$
For the subsequent error analysis, define
\begin{equation}
\mathbf e_j^o(t):=
\begin{bmatrix}
(x_j(t)-\hat x_j^{r_1}(t))^\top&\cdots&
(x_j(t)-\hat x_j^{r_{\eta_j}}(t))^\top
\end{bmatrix}^\top.
\label{eq:target_observer_error}
\end{equation}
\end{definition}

For theoretical analysis, using the state-estimation error defined in
\eqref{eq:target_observer_error}, the stacked state correction can be
equivalently written as
$
\xi_j^o(t)=\bar M_j\mathbf e_j^o(t)$,
which will be used in the subsequent observer-error analysis. Definition~\ref{2} provides a distributed
correction mechanism for reconstructing both the state and input of
agent $j$. Each estimator combines information exchanged with
neighboring estimators and relayed information associated with the true state or input of agent $j$.
Based on Definition~\ref{2}, the following theorem specifies the
input-estimation update and establishes a uniform bound on the resulting estimation error.

\begin{theorem}\label{thm:input_observer}
Based on Definition~\ref{2} and Theorem~\ref{M_kc}, for each agent $j\in\mathcal V$ with $\eta_j>0$, let
$
\mathcal O_j=\{r_1,\ldots,r_{\eta_j}\}.
$
For each $t\in\mathbb N$, define the stacked input estimate as
$
\mathbf{\hat u}_j^o(t):=
[(\hat u_j^{r_1}(t))^\top\ \cdots\
(\hat u_j^{r_{\eta_j}}(t))^\top]^\top.
$
Using the stacked correction vector $\xi_{u,j}^o(t)$ defined in
Definition~\ref{2}, the corrected input estimate is given by
\begin{equation}
\mathbf{\hat u}_j^{o,+}(t)
=
\mathbf{\hat u}_j^o(t)
+
s_j\xi_{u,j}^o(t), \quad \forall t \in \mathbb{N},
\label{eq:corrected_input_estimate}
\end{equation}
with
$
\mathbf{\hat u}_j^o(t+1)
=
\mathbf{\hat u}_j^{o,+}(t),
$
where
$
s_j:=
\frac{2}
{\lambda_{\min}(M_j)+\lambda_{\max}(M_j)}
$
and
$
\sigma_j:=
\|I_{\eta_j}-s_jM_j\|<1,$ with $M_j$ defined in~\eqref{eq:Mkc}. \textbf{Equivalently, each agent $r\in\mathcal O_j$ locally performs
$
\hat u_j^{r,+}(t)
=
\hat u_j^r(t)+s_j\xi_{u,j}^r(t).
$}

Define the corrected input-estimation error as
$
\mathbf e_{u,j}^{o,+}(t)
:=
\mathbf 1_{\eta_j}\otimes u_j(t)
-
\mathbf{\hat u}_j^{o,+}(t).
$
Suppose that, for some $l_j\ge0$ and $\bar u_j^o\ge0$,
$
\|u_j(t+1)-u_j(t)\|\le l_j
$ for all $t\in\mathbb N$
and
$
\|\mathbf e_{u,j}^{o,+}(0)\|\le\bar u_j^o.
$
Then, for all $t\in\mathbb N$,
$
\|\mathbf e_{u,j}^{o,+}(t)\|\le\nu_j,
$
where
$
\nu_j:=
\max\left\{
\bar u_j^o,
\frac{\sigma_j\sqrt{\eta_j}l_j}
{1-\sigma_j}
\right\}.
$
\end{theorem}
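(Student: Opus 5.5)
We first derive the recursion for the corrected input-estimation error and then bound it by a contraction argument.

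\emph{Step 1 (the correction as an error map).} Exactly as for the state correction, the stacked input correction satisfies
$\xi_{u,j}^o(t)=\bar M_j^u\mathbf e_{u,j}^o(t)$, where $\bar M_j^u:=M_j\otimes I_{n_u}$ and $\mathbf e_{u,j}^o(t):=\mathbf 1_{\eta_j}\otimes u_j(t)-\mathbf{\hat u}_j^o(t)$. To see this, expand \eqref{eq:input_observer_correction}: the term $\hat u_j^\ell-\hat u_j^r$ equals $(u_j-\hat u_j^r)-(u_j-\hat u_j^\ell)$, so the consensus sum produces the $D_j$ part and the pinning term produces the $H_j$ part.

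\emph{Step 2 (error recursion).} Subtracting \eqref{eq:corrected_input_estimate} from $\mathbf 1_{\eta_j}\otimes u_j(t)$ gives
\[
\mathbf e_{u,j}^{o,+}(t)=\bigl((I_{\eta_j}-s_jM_j)\otimes I_{n_u}\bigr)\mathbf e_{u,j}^o(t).
\]
Since $\mathbf{\hat u}_j^o(t+1)=\mathbf{\hat u}_j^{o,+}(t)$, we have
\[
\mathbf e_{u,j}^o(t+1)=\mathbf e_{u,j}^{o,+}(t)+\mathbf 1_{\eta_j}\otimes\bigl(u_j(t+1)-u_j(t)\bigr).
\]
Combining the two and writing $G_j:=(I_{\eta_j}-s_jM_j)\otimes I_{n_u}$,
\[
\mathbf e_{u,j}^{o,+}(t+1)=G_j\Bigl(\mathbf e_{u,j}^{o,+}(t)+\mathbf 1_{\eta_j}\otimes\bigl(u_j(t+1)-u_j(t)\bigr)\Bigr).
\]
Moreover, $\|G_j\|=\|I_{\eta_j}-s_jM_j\|=\sigma_j$, because the Kronecker product with an identity preserves the spectral norm.

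\emph{Step 3 (contraction constant).} We justify $\sigma_j<1$. By Theorem~\ref{M_kc}, $M_j$ is symmetric positive definite, with eigenvalues in $[\lambda_{\min},\lambda_{\max}]$ and $\lambda_{\min}>0$. With the stated choice of $s_j$, the eigenvalues of $I_{\eta_j}-s_jM_j$ lie in $[-\rho,\rho]$, where
\[
\rho=\frac{\lambda_{\max}-\lambda_{\min}}{\lambda_{\max}+\lambda_{\min}}<1.
\]
Since $I_{\eta_j}-s_jM_j$ is symmetric, its norm equals its spectral radius, so $\sigma_j=\rho<1$. We use $\|\mathbf 1_{\eta_j}\otimes v\|=\sqrt{\eta_j}\|v\|$ to obtain the scalar inequality
\[
a_{t+1}\le\sigma_j a_t+\sigma_j\sqrt{\eta_j}\,l_j,\qquad a_t:=\|\mathbf e_{u,j}^{o,+}(t)\|.
\]

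\emph{Step 4 (induction).} Suppose $a_t\le\nu_j$. Then
\[
a_{t+1}\le\sigma_j\nu_j+\sigma_j\sqrt{\eta_j}\,l_j\le\sigma_j\nu_j+(1-\sigma_j)\nu_j=\nu_j,
\]
where the second inequality uses $\nu_j\ge\sigma_j\sqrt{\eta_j}\,l_j/(1-\sigma_j)$. The base case $a_0\le\bar u_j^o\le\nu_j$ holds by assumption, which completes the induction.

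The main obstacle is Step 1, namely confirming that the local correction \eqref{eq:input_observer_correction} stacks exactly into $M_j\otimes I_{n_u}$ acting on the error. This requires checking that the diagonal entries of $D_j$, the degrees within $\mathcal O_j$, match the number of consensus terms, and that the pinning weights match $H_j$. The remaining steps are routine.
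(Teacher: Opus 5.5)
Your proposal is correct and follows the paper's proof. Both derive the recursion $\mathbf e_{u,j}^{o,+}(t+1)=((I_{\eta_j}-s_jM_j)\otimes I_{n_u})(\mathbf e_{u,j}^{o,+}(t)+\mathbf 1_{\eta_j}\otimes(u_j(t+1)-u_j(t)))$, then the scalar bound $a_{t+1}\le\sigma_ja_t+\sigma_j\sqrt{\eta_j}\,l_j$, then induction with $\nu_j$. You also spell out two steps the paper leaves implicit: that the stacked correction equals $(M_j\otimes I_{n_u})$ applied to the pre-correction error, and that $\sigma_j=(\lambda_{\max}-\lambda_{\min})/(\lambda_{\max}+\lambda_{\min})<1$ follows from symmetry and $M_j\succ0$.
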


The proof of Theorem~\ref{thm:input_observer} is presented in the Appendix. Theorem~\ref{thm:input_observer} shows that the distributed input estimator can track the time-varying input $u_j(t)$ with a uniformly bounded estimation error, provided that the input variation between consecutive sampling instants is bounded. 
With the distributed state correction and corrected input estimates
available, we now define the distributed state observer used to
reconstruct the remote state $x_j(t)$.

\begin{definition}\label{3}
Based on Theorem~\ref{thm:input_observer}, for each agent $j\in\mathcal V$ with $\eta_j>0$, and
$
\mathcal O_j=\{r_1,\ldots,r_{\eta_j}\},
$ the stacked state
observer is defined as
\begin{equation}
\mathbf{\hat x}_j^o(t+1)
=
\mathbf A_j^o\mathbf{\hat x}_j^o(t)
+
\mathbf B_j^o\mathbf{\hat u}_j^{o,+}(t)
+
\Lambda_j^o\xi_j^o(t),
\forall t\in\mathbb N,
\label{eq:observer}
\end{equation}
where
$
\mathbf{\hat x}_j^o(t):=
[(\hat x_j^{r_1}(t))^\top\ \cdots\
(\hat x_j^{r_{\eta_j}}(t))^\top]^\top
\in\mathbb R^{\eta_j n_x}
$
is the stacked state estimate, with
$\hat x_j^{r}(t)$ denoting the estimate of $x_j(t)$ maintained by
agent $r\in\mathcal O_j$, 
$
\mathbf{\hat u}_j^{o,+}(t)
\in\mathbb R^{\eta_j n_u},
$ is the corrected stacked input estimate defined in~\eqref{eq:corrected_input_estimate}.
Moreover,
$
\mathbf A_j^o:=I_{\eta_j}\otimes A_j
$
and
$
\mathbf B_j^o:=I_{\eta_j}\otimes B_j,
$
where $A_j$ and $B_j$ are defined in
\eqref{eq:agent_dynamics}, and
\begin{equation}
\label{eq:stacked_observer_gain}
\Lambda_j^o
:=
\operatorname{blkdiag}
\left(
\Lambda_j^{r_1},
\ldots,
\Lambda_j^{r_{\eta_j}}
\right)
\in\mathbb R^{\eta_j n_x\times\eta_j n_x},
\end{equation}
where $\Lambda_j^{r}\in\mathbb R^{n_x\times n_x}$ is the observer
gain associated with agent $r \in \mathcal{O}_j$.
The stacked correction vector $\xi_j^o(t)$ is defined in Definition~\ref{2}.
The corresponding observer-error dynamics are given by
\begin{equation}
\mathbf e_j^o(t+1)
=
\Psi_j\mathbf e_j^o(t)
+
\mathbf B_j^o\mathbf e_{u,j}^{o,+}(t),
\quad \forall t\in\mathbb N,
\label{eq:error_dynamics}
\end{equation}
where
$
\Psi_j
:=
\mathbf A_j^o-\Lambda_j^o\bar M_j
\in\mathbb R^{\eta_j n_x\times\eta_j n_x},
$
$\mathbf e_j^o(t)$ is defined in
\eqref{eq:target_observer_error}, and
$\mathbf e_{u,j}^{o,+}(t)$ is the corrected input-estimation error
defined in Theorem~\ref{thm:input_observer}.
\end{definition}

\textbf{Equivalently, the $r$-th block of \eqref{eq:observer} corresponds to
the local update
$
\hat x_j^{r_s}(t+1)
=
A_j\hat x_j^{r_s}(t)
+
B_j\hat u_j^{r_s,+}(t)
+
\Lambda_j^{r_s}\xi_j^{r_s}(t),
$
for all $r\in\mathcal O_j$ and $t\in\mathbb N$, where
$\xi_j^{r}(t)$ is defined in
\eqref{eq:implementable_innovation}.} Definition~\ref{3} provides a distributed realization of
the remote-state observer, where each agent in $\mathcal O_j$ updates
its estimate of $x_j(t)$ using the corrected input estimate and locally
available state-correction information. The resulting error dynamics
show that the state-estimation error is governed by the homogeneous
observer dynamics and is additionally affected by the input-estimation
error.

\subsection{Observer-Error Bound}

Based on the observer-error dynamics in \eqref{eq:error_dynamics}, we impose
the following Lyapunov feasibility condition on the homogeneous
observer-error dynamics. This condition will be used to derive a uniform
bound on the state-estimation error.

\begin{assumption}\label{ass:observer_stability}
Based on the observer-error dynamics \eqref{eq:error_dynamics} in Definition~\ref{3}, for each
agent $j\in\mathcal V$ with $\eta_j>0$, there exist scalars
$q_j^o\in(0,1)$ and $\gamma_j\in[1,\infty)$, a structured observer gain
$
\Lambda_j^o\in
\mathbb R^{\eta_j n_x\times\eta_j n_x}
$
of the form \eqref{eq:stacked_observer_gain}, and a symmetric matrix
$
P_j\in\mathbb R^{\eta_j n_x\times\eta_j n_x},
\quad P_j\succ0,
$
such that
\begin{equation}
\Psi_j^\top P_j\Psi_j
\preceq q_j^oP_j,
\label{eq:lyapunov_condition}
\end{equation}
and
\begin{equation}
I_{\eta_j n_x}
\preceq P_j
\preceq
\gamma_j I_{\eta_j n_x}.
\label{eq:observer_P_bounds}
\end{equation}
\end{assumption}

Assumption~\ref{ass:observer_stability} requires the homogeneous
observer-error dynamics in \eqref{eq:error_dynamics} to be contractive.
Together with the bounded input-estimation error established in
Theorem~\ref{thm:input_observer}, this property allows us to characterize
the evolution of the state-estimation error and derive a uniform bound, as stated in the following theorem.

\begin{theorem}\label{thm:observer_convergence_bound}
Based on the observer-error dynamics in \eqref{eq:error_dynamics}, under
Assumption~\ref{ass:observer_stability} and Theorem~\ref{thm:input_observer}, for each agent
$j\in\mathcal V$ with $\eta_j>0$, suppose that there exists a scalar
$\bar e_{j,0}^o\in\mathbb{R}_{\ge0}$ such that
$
\|\mathbf e_j^o(0)\|\le\bar e_{j,0}^o,
$
where $\mathbf e_j^o(t)$ and $\mathbf B_j^o$ are defined in
Definition~\ref{3}, $q_j^o$, $\gamma_j$, and $P_j$ are given by
Assumption~\ref{ass:observer_stability}, and $\nu_j$ is the uniform
input-estimation error bound established in
Theorem~\ref{thm:input_observer}. Then, for all $t\in\mathbb N$,
\begin{equation}
\|\mathbf e_j^o(t)\|
\le
\sqrt{
\frac{\lambda_{\max}(P_j)}
{\lambda_{\min}(P_j)}
}
\left(
(q_j^o)^{t/2}\bar e_{j,0}^o
+
\frac{1-(q_j^o)^{t/2}}
{1-\sqrt{q_j^o}}
\|\mathbf B_j^o\|\nu_j
\right),
\label{eq:error_bound}
\end{equation}
where $\lambda_{\min}(P_j)$ and $\lambda_{\max}(P_j)$ denote the
minimum and maximum eigenvalues of $P_j$, respectively. Consequently,
$
\|\mathbf e_j^o(t)\|
\le
\bar e_j^o,
\quad \forall t\in\mathbb N,
$
where
\begin{equation}
\bar e_j^o:=
\sqrt{\gamma_j}
\left(
\bar e_{j,0}^o
+
\frac{\|\mathbf B_j^o\|\nu_j}
{1-\sqrt{q_j^o}}
\right)
\label{eq:uniform_observer_error_bound}
\end{equation}
is the uniform state-estimation error bound associated with agent $j$.
\end{theorem}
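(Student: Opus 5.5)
We will work with the weighted norm $\|e\|_{P_j}:=\sqrt{e^\top P_je}$ induced by the Lyapunov matrix of Assumption~\ref{ass:observer_stability}, and show that it obeys a scalar linear recursion.

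First, condition \eqref{eq:lyapunov_condition} gives, for every $e\in\mathbb R^{\eta_jn_x}$,
\[
\|\Psi_je\|_{P_j}^2=e^\top\Psi_j^\top P_j\Psi_je\le q_j^o\|e\|_{P_j}^2,
\]
so that $\|\Psi_je\|_{P_j}\le\sqrt{q_j^o}\,\|e\|_{P_j}$. We also use the standard equivalence
\[
\sqrt{\lambda_{\min}(P_j)}\,\|e\|\le\|e\|_{P_j}\le\sqrt{\lambda_{\max}(P_j)}\,\|e\|.
\]

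Next, apply the triangle inequality for $\|\cdot\|_{P_j}$ to the error dynamics \eqref{eq:error_dynamics}. Theorem~\ref{thm:input_observer} gives $\|\mathbf e_{u,j}^{o,+}(t)\|\le\nu_j$ for all $t$. Hence
\[
\|\mathbf e_j^o(t+1)\|_{P_j}\le\sqrt{q_j^o}\,\|\mathbf e_j^o(t)\|_{P_j}+\sqrt{\lambda_{\max}(P_j)}\,\|\mathbf B_j^o\|\nu_j .
\]
Unrolling this scalar recursion by induction on $t$ and summing the geometric series $\sum_{s=0}^{t-1}(q_j^o)^{s/2}=\frac{1-(q_j^o)^{t/2}}{1-\sqrt{q_j^o}}$ yields
\[
\|\mathbf e_j^o(t)\|_{P_j}\le(q_j^o)^{t/2}\sqrt{\lambda_{\max}(P_j)}\,\bar e_{j,0}^o+\frac{1-(q_j^o)^{t/2}}{1-\sqrt{q_j^o}}\sqrt{\lambda_{\max}(P_j)}\,\|\mathbf B_j^o\|\nu_j .
\]
Here we also used $\|\mathbf e_j^o(0)\|_{P_j}\le\sqrt{\lambda_{\max}(P_j)}\,\bar e_{j,0}^o$. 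Dividing by $\sqrt{\lambda_{\min}(P_j)}$ and using the lower norm equivalence gives \eqref{eq:error_bound}.

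For the uniform bound \eqref{eq:uniform_observer_error_bound}, use \eqref{eq:observer_P_bounds}: it gives $\lambda_{\min}(P_j)\ge1$ and $\lambda_{\max}(P_j)\le\gamma_j$, so the prefactor is at most $\sqrt{\gamma_j}$. Since $q_j^o\in(0,1)$, we have $(q_j^o)^{t/2}\le1$ and $1-(q_j^o)^{t/2}\le1$ for all $t\in\mathbb N$. Substituting these bounds into \eqref{eq:error_bound} gives $\|\mathbf e_j^o(t)\|\le\bar e_j^o$.

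The only step needing care is the choice of where the $P_j$-weighting enters. If we worked in the Euclidean norm, the contraction factor would pick up a condition-number factor at every step, and the bound would fail to be uniform. Carrying the recursion entirely in $\|\cdot\|_{P_j}$ and converting to the Euclidean norm only once at the end avoids this. The one-time conversion produces exactly the prefactor $\sqrt{\lambda_{\max}(P_j)/\lambda_{\min}(P_j)}$ appearing in \eqref{eq:error_bound}.
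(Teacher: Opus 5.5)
Your proof is correct and follows essentially the same route as the paper: both rely on the weighted-norm contraction $\|\Psi_j e\|_{P_j}\le\sqrt{q_j^o}\,\|e\|_{P_j}$ from \eqref{eq:lyapunov_condition}, a geometric sum, and the eigenvalue bounds from \eqref{eq:observer_P_bounds}. The only difference is cosmetic. The paper first derives $\|\Psi_j^t\|\le\sqrt{\lambda_{\max}(P_j)/\lambda_{\min}(P_j)}\,(q_j^o)^{t/2}$ and applies it term by term to the unrolled error dynamics, whereas you unroll a scalar recursion in $\|\cdot\|_{P_j}$ and convert to the Euclidean norm once at the end; both give the same constant.
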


The proof of Theorem~\ref{thm:observer_convergence_bound} is presented
in the Appendix. Theorem~\ref{thm:observer_convergence_bound} shows that the
state-estimation error consists of a decaying component caused by the
initial estimation error and a bounded component caused by the
input-estimation error. Therefore, the observer error remains uniformly
bounded by $\bar e_j^o$, which will be used to quantify the
observer-induced state perturbation in the subsequent analysis.

\subsection{Observer-Induced State-Perturbation Bound}

To characterize how the remote-state estimation error affects the
implemented closed-loop dynamics, we introduce the corresponding
observer-induced state perturbation.

\begin{definition}\label{def:observer_induced_perturbation}
Based on the subsystem dynamics in~\eqref{eq:agent_dynamics}, the
information vectors $\chi_i(t)$ and $\hat\chi_i(t)$ defined in~Section 2.D,
and the feedback law in~\eqref{eq:e_feedback_law}, for each agent
$i\in\mathcal V$ with $\eta_i>0$, the implemented closed-loop dynamics
can be written as
\begin{equation}
x_i(t+1)
=
(\bar A_i+B_iK_i)\chi_i(t)+e_i(t),
\quad \forall t\in\mathbb N,
\label{eq:perturbed_closed_loop}
\end{equation}
where
$
\bar A_i:=
\begin{bmatrix}
A_i&
0_{n_x\times(n_{\chi_i}-n_x)}
\end{bmatrix}
\in\mathbb R^{n_x\times n_{\chi_i}},
$
and
\begin{equation}
e_i(t)
:=
B_iK_i\bigl(\hat\chi_i(t)-\chi_i(t)\bigr)
=
-B_iK_iE_i e_i^o(t)
\in\mathbb R^{n_x}
\label{eq:observer_induced_perturbation}
\end{equation}
is the observer-induced state perturbation.

Here,
$
e_i^o(t):=
x_i^o(t)-\hat x_i^o(t)
\in\mathbb R^{\eta_i n_x}
$
is the stacked remote-state estimation error, where
$x_i^o(t)$ and $\hat x_i^o(t)$ are defined in
\eqref{eq:observer_state} and \eqref{eq:estimated_remote_state},
respectively, and
\begin{equation}
E_i:=
\begin{bmatrix}
0_{n_x\times\eta_i n_x}\\
0_{|\mathcal N_i|n_x\times\eta_i n_x}\\
I_{\eta_i n_x}
\end{bmatrix}
\in\mathbb R^{n_{\chi_i}\times\eta_i n_x}
\label{eq:Ei}
\end{equation}
is the selection matrix satisfying
$
\chi_i(t)-\hat\chi_i(t)=E_i e_i^o(t).
$
\end{definition}

Definition~\ref{def:observer_induced_perturbation} shows that the
remote-state estimation error enters the physical dynamics as an
additive perturbation through the feedback gain $K_i$. The uniform
observer-error bounds derived above are therefore aggregated according
to the remote states used by each local controller to quantify the
resulting effect on the physical state evolution and subsequent safety
analysis.

\begin{theorem}\label{thm:ei_bound}
Based on the uniform observer-error bound in
Theorem~\ref{thm:observer_convergence_bound} and the
observer-induced state perturbation in
~\eqref{eq:observer_induced_perturbation}, for each agent
$i\in\mathcal V$ with $\eta_i>0$, the stacked remote-state estimation
error satisfies
\begin{equation}
\|e_i^o(t)\|
\le
\delta_i,
\quad
\forall t\in\mathbb N,
\label{eq:local_observer_error_bound}
\end{equation}
where
\begin{equation}
\delta_i
:=
\sqrt{
\sum_{j\in\mathcal N_i^{k\text{-hop}}}
(\bar e_j^o)^2
}
\ge0
\label{eq:delta_i}
\end{equation}
is the uniform bound on the stacked remote-state estimation error used
by agent $i$, and $\bar e_j^o$ is the uniform observer-error bound
defined in \eqref{eq:uniform_observer_error_bound}.
If there exists a scalar $g_i\in \mathbb{R}_{\ge 0}$ satisfies
$(B_iK_iE_i)(B_iK_iE_i)^\top
\preceq
g_i^2I_{n_x},$
then the observer-induced state perturbation satisfies
\begin{equation}
\|e_i(t)\|
\le
\varepsilon_i,
\quad
\forall t\in\mathbb N,
\label{eq:state_perturbation_bound}
\end{equation}
where
$
\varepsilon_i:=g_i\delta_i\in \mathbb{R}_{\ge 0}
$
is the uniform observer-induced state-perturbation bound of agent $i$.
\end{theorem}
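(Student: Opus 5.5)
The argument has two parts: first bound the stacked remote-state error $e_i^o(t)$ blockwise, then push that bound through the linear map $B_iK_iE_i$.

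\emph{Step 1: blockwise decomposition.} Write $\mathcal N_i^{k\text{-hop}}=\{j_1,\ldots,j_{\eta_i}\}$. By \eqref{eq:observer_state} and \eqref{eq:estimated_remote_state}, the $s$-th block of $e_i^o(t)$ is $x_{j_s}(t)-\hat x_{j_s}^i(t)$. Since $j_s\in\mathcal N_i^{k\text{-hop}}$ and $\mathcal G$ is undirected, Definition~\ref{def:observer_information_structure} gives $i\in\mathcal O_{j_s}$; note also $\eta_{j_s}=|\mathcal O_{j_s}|\ge1$. Hence agent $i$ is one of the indices $r_a$ of $\mathcal O_{j_s}$, and the block $x_{j_s}(t)-\hat x_{j_s}^i(t)$ is exactly one $n_x$-block of $\mathbf e_{j_s}^o(t)$ in \eqref{eq:target_observer_error}. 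A sub-block of a vector has norm at most that of the whole vector, so
\[
\|x_{j_s}(t)-\hat x_{j_s}^i(t)\|\le\|\mathbf e_{j_s}^o(t)\|\le\bar e_{j_s}^o
\]
for all $t$, by Theorem~\ref{thm:observer_convergence_bound}.

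\emph{Step 2: summing the blocks.} Using $\|e_i^o(t)\|^2=\sum_s\|x_{j_s}(t)-\hat x_{j_s}^i(t)\|^2$ and the bound from Step 1, we obtain $\|e_i^o(t)\|^2\le\sum_{j\in\mathcal N_i^{k\text{-hop}}}(\bar e_j^o)^2=\delta_i^2$. This is \eqref{eq:local_observer_error_bound}.

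\emph{Step 3: the perturbation bound.} From \eqref{eq:observer_induced_perturbation}, $e_i(t)=-B_iK_iE_ie_i^o(t)$, so $\|e_i(t)\|\le\|B_iK_iE_i\|\,\|e_i^o(t)\|$. Let $G:=B_iK_iE_i$. The largest eigenvalue of $GG^\top$ equals $\|G\|^2$, and the hypothesis $GG^\top\preceq g_i^2I_{n_x}$ gives $\lambda_{\max}(GG^\top)\le g_i^2$. Hence $\|G\|\le g_i$, and therefore $\|e_i(t)\|\le g_i\delta_i=\varepsilon_i$ for all $t\in\mathbb N$.

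\emph{Main obstacle.} The only step needing care is the index bookkeeping in Step 1. We must confirm that agent $i$'s estimate $\hat x_j^i$ really appears as a block of $\mathbf e_j^o$, i.e.\ that $i\in\mathcal O_j$ whenever $j\in\mathcal N_i^{k\text{-hop}}$. This follows from the symmetry $d(i,j)=d(j,i)$ in an undirected graph. The remaining steps are routine norm inequalities.
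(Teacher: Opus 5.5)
Your proposal is correct and follows essentially the same route as the paper's proof. You bound each block $x_j(t)-\hat x_j^i(t)$ by $\|\mathbf e_j^o(t)\|\le\bar e_j^o$, sum the squares to obtain $\delta_i$, and use $(B_iK_iE_i)(B_iK_iE_i)^\top\preceq g_i^2I_{n_x}$ to get $\|B_iK_iE_i\|\le g_i$; your explicit check that $i\in\mathcal O_j$ (with $\eta_j\ge1$) is bookkeeping the paper leaves implicit, and it is correct.
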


The proof of Theorem~\ref{thm:ei_bound} is presented in the Appendix. Theorem~\ref{thm:ei_bound} propagates the observer errors
through two stages. First, the individual remote-state error bounds are
aggregated into $\delta_i$, which bounds the total estimation error used
by controller $i$. Second, this error is amplified through $K_i$, yielding the physical state-perturbation bound
$\varepsilon_i$. Hence, $\varepsilon_i$ captures the
observer-controller coupling and will be incorporated into the
subsequent safety synthesis.

\textbf{The error induced by observers propagates through three levels: 1) the stacked
observer error associated with each observed agent $j$ satisfies
$
\|\mathbf e_j^o(t)\|\le \bar e_j^o;
$
2) these bounds are then aggregated into the remote-state estimation error used by the controller of agent $i$, yielding $\|e_i^o(t)\|\le\delta_i;$
3) after passing through the state-feedback gain, the resulting
observer-induced state perturbation satisfies
$
\|e_i(t)\|\le\varepsilon_i.
$}

\section{COMMUNICATION-AWARE SAFE CONTROLLER DESIGN}

Building on the observer-induced state-perturbation bound
$\varepsilon_i$ established in Theorem~\ref{thm:ei_bound}, this section
develops a communication-aware safe controller synthesis framework.
We first incorporate $\varepsilon_i$ into the construction of local
$\varepsilon_i$-RSI sets and the enforcement of pairwise relative-state
safety constraints. We then derive tractable LMI conditions for jointly
synthesizing the distributed observers and state-feedback controllers under safety constraints.

\subsection{Local RSI and Relative-State Safety Conditions}

Building on Theorem~\ref{thm:ei_bound}, this
subsection develops the safety conditions required for controller
synthesis. We first characterize robust invariance of the local state
sets under observer-induced perturbations, and then establish conditions
for preserving the pairwise relative-state safety constraints.

\begin{definition}\label{def:epsilon_rsi}
Based on Theorem~\ref{thm:ei_bound} and the perturbed closed-loop dynamics in~\eqref{eq:perturbed_closed_loop}, for each agent
$i\in\mathcal V$, define the admissible observer-induced perturbation
set as
$
\Delta(\varepsilon_i)
:=
\left\{
e\in\mathbb R^{n_x}
\mid
\|e\|\le\varepsilon_i
\right\},
$
where $\varepsilon_i \in \mathbb{R}_{\ge0}$ is defined in Theorem~\ref{thm:ei_bound}.
Consider the local ellipsoid
\begin{equation}
\mathcal X_{\mathrm{safe}}^i
:=
\left\{
x_i\in\mathbb R^{n_x}
\mid
x_i^\top Q_i x_i\le1
\right\}
\subseteq\mathcal X^i,
\quad
Q_i\succ0,
\label{eq:epsilon_RSI_set}
\end{equation}
where $\mathcal X^i$ is the admissible local safe set defined in
\eqref{eq:safe_set}.

Since the controller of agent $i$ depends on the states of agent $i$
and the agents in
$
\mathcal R_i=\mathcal N_i\cup\mathcal N_i^{k\text{-hop}},
$
as defined in Section~2, define the admissible augmented-state domain as
\begin{equation}
\mathcal D_i
:=
\mathcal X_{\mathrm{safe}}^i
\times
\prod_{p\in\mathcal R_i}
\mathcal X_{\mathrm{safe}}^p, 
\quad
p\in\mathcal R_i.
\label{eq:augmented_admissible_domain}
\end{equation}
The set $\mathcal X_{\mathrm{safe}}^i$ is called an
$\varepsilon_i$-RSI set for agent $i$ if, for all
$
\chi_i\in\mathcal D_i
$
and
$
e_i\in\Delta(\varepsilon_i),
$
the perturbed closed-loop dynamics in
\eqref{eq:perturbed_closed_loop} satisfy
$
(\bar A_i+B_iK_i)\chi_i+e_i
\in
\mathcal X_{\mathrm{safe}}^i.$
\end{definition}

Definition~\ref{def:epsilon_rsi} requires the local safe set of each
agent to remain invariant for all admissible augmented states and all
observer-induced perturbations bounded by $\varepsilon_i$. To obtain
tractable conditions for enforcing this local RSI
property in the subsequent controller synthesis, we decompose the robust
invariance requirement into a nominal contraction condition and a
perturbation accommodation condition.

\begin{theorem}\label{thm:epsilon_two_condition}
Based on the Definition~\ref{def:epsilon_rsi} and the perturbed closed-loop dynamics
in \eqref{eq:perturbed_closed_loop}, 
for each agent $i\in\mathcal V$, consider any
$K_i\in\mathbb R^{n_u\times n_{\chi_i}}$, $Q_i\in\mathbb R^{n_x\times n_x}$ with $Q_i\succ0$, and
$\varepsilon_i \in \mathbb{R}_{\ge0}$. The condition
\begin{equation}
\bigl((\bar A_i+B_iK_i)\chi_i+e_i\bigr)^\top
Q_i
\bigl((\bar A_i+B_iK_i)\chi_i+e_i\bigr)
\le1
\label{eq:robust_condition_main}
\end{equation}
holds for all
$\chi_i\in\mathcal D_i$ and
$e_i\in\Delta(\varepsilon_i)$ if there exists a scalar
$\kappa_i\in(0,1)$ such that

\textbf{a)}
$
\chi_i^\top
(\bar A_i+B_iK_i)^\top
Q_i
(\bar A_i+B_iK_i)
\chi_i
\le\kappa_i
$
for all $\chi_i\in\mathcal D_i$;

\textbf{b)}
$
(\zeta_i+\tilde e)^\top
Q_i
(\zeta_i+\tilde e)
\le1
$
for all $\zeta_i$ satisfying
$
\zeta_i^\top Q_i\zeta_i\le\kappa_i
$
and all
$
\tilde e\in\Delta(\varepsilon_i).
$
\end{theorem}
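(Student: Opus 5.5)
The argument is a direct composition of conditions a) and b) through an auxiliary variable; no further machinery is needed. Fix arbitrary $\chi_i\in\mathcal D_i$ and $e_i\in\Delta(\varepsilon_i)$, and introduce the nominal one-step successor
\[
\zeta_i:=(\bar A_i+B_iK_i)\chi_i\in\mathbb R^{n_x}.
\]
With this notation, the left-hand side of \eqref{eq:robust_condition_main} is exactly $(\zeta_i+e_i)^\top Q_i(\zeta_i+e_i)$. The goal therefore reduces to showing that the specific pair $(\zeta_i,e_i)$ is admissible in condition b).

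\textbf{Step 1 (nominal contraction).} Expanding $\zeta_i^\top Q_i\zeta_i$ gives
\[
\zeta_i^\top Q_i\zeta_i=\chi_i^\top(\bar A_i+B_iK_i)^\top Q_i(\bar A_i+B_iK_i)\chi_i .
\]
By condition a), applied at the chosen $\chi_i\in\mathcal D_i$, this quantity is at most $\kappa_i$. Hence $\zeta_i$ lies in the shrunken ellipsoid $\{\zeta\mid \zeta^\top Q_i\zeta\le\kappa_i\}$.

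\textbf{Step 2 (perturbation accommodation).} Take $\tilde e:=e_i$. This choice is admissible in condition b) because $e_i\in\Delta(\varepsilon_i)$ by assumption. Step 1 shows that $\zeta_i$ also meets the hypothesis of b). Condition b) is quantified over all such $\zeta_i$ and all $\tilde e\in\Delta(\varepsilon_i)$, so it yields $(\zeta_i+e_i)^\top Q_i(\zeta_i+e_i)\le1$.

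\textbf{Step 3 (conclusion).} Substituting back the definition of $\zeta_i$ gives \eqref{eq:robust_condition_main}. Since $\chi_i$ and $e_i$ were arbitrary, the inequality holds on all of $\mathcal D_i\times\Delta(\varepsilon_i)$. Because $\mathcal X_{\mathrm{safe}}^i$ is the unit sublevel set of $x_i^\top Q_ix_i$, this is precisely the $\varepsilon_i$-RSI property of Definition~\ref{def:epsilon_rsi}.

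There is no substantive obstacle here; the statement is a sufficiency decomposition. The only point needing care is the quantifier bookkeeping: the set $\{\zeta\mid\zeta^\top Q_i\zeta\le\kappa_i\}$ in b) must contain the image of $\mathcal D_i$ under the nominal closed-loop map, and that containment is exactly what a) provides. The restriction $\kappa_i\in(0,1)$ plays no role in the implication itself. It only matters later, when b) is turned into an LMI (for instance via a triangle inequality in the $Q_i$-norm, requiring $\sqrt{\kappa_i}+\sqrt{\lambda_{\max}(Q_i)}\,\varepsilon_i\le1$).
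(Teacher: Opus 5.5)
Your proof is correct and is essentially the argument the paper relies on. The paper defers this result to \cite{Zhong2022}, and the underlying argument is the same composition you give: set $\zeta_i:=(\bar A_i+B_iK_i)\chi_i$, use a) to place $\zeta_i$ in the $\kappa_i$-scaled ellipsoid, then invoke b) with $\tilde e=e_i$. Your side remark on the $Q_i$-norm triangle inequality is also exactly how the paper later verifies b) via $L_i\succeq\alpha_i I_{n_x}$ in the proof of Theorem~\ref{the:theoremLMI}.
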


Theorem~\ref{thm:epsilon_two_condition} is similar to that in~\cite{Zhong2022}. 
Theorem~\ref{thm:epsilon_two_condition} decomposes the robust invariance requirement into
nominal contraction and perturbation accommodation conditions, providing
the basis for the subsequent LMI-based synthesis. To obtain a tractable
representation of the Cartesian-product domain $\mathcal D_i$ in
\eqref{eq:augmented_admissible_domain}, we next introduce an ellipsoidal
outer approximation of the augmented-state domain.

\begin{definition}\label{def:lifted_ellipsoid}
Based on Definition~\ref{def:epsilon_rsi}, for each
agent $i\in\mathcal V$, let
$
L_i:=Q_i^{-1}
\in\mathbb R^{n_x\times n_x}.
$
Let
$
m_i:=|\mathcal R_i|
$
and 
$
\mathcal R_i=\{p_1,\ldots,p_{m_i}\},
$
where the ordering is consistent with the corresponding state blocks
in $\chi_i$. Define
\begin{equation}
\tilde L_i
:=
(m_i+1)\operatorname{blkdiag}
\bigl(
L_i,L_{p_1},\ldots,L_{p_{m_i}}
\bigr)
\in\mathbb R^{n_{\chi_i}\times n_{\chi_i}},
\label{eq:lifted_assumption_matrix}
\end{equation}
and
\begin{equation}
\widetilde{\mathcal X}_i
:=
\left\{
\chi_i\in\mathbb R^{n_{\chi_i}}
\mid
\chi_i^\top\tilde L_i^{-1}\chi_i\le1
\right\}.
\label{eq:lifted_ellipsoid}
\end{equation}
\end{definition}

By construction,
$
\mathcal D_i\subseteq\widetilde{\mathcal X}_i,
$
so any condition enforced over $\widetilde{\mathcal X}_i$ also applies
to all admissible augmented states in $\mathcal D_i$. This lifted
ellipsoidal representation is therefore used to derive a tractable
condition for preserving the pairwise relative-state safety constraints
by expressing the relative dynamics along the constrained direction
$d_i$ and isolating the effects of the observer-induced perturbation and
the neighboring input.

\begin{definition}\label{def:relative_safety_quantities}
Based on the relative-state safe set in
\eqref{eq:relative_safe_set} and the perturbed closed-loop dynamics in
\eqref{eq:perturbed_closed_loop}, for each
$i\in\mathcal V$ and $p\in\mathcal R_i$, the one-step relative
dynamics along the constrained direction $d_i$ are expressed as
\begin{equation}
d_i y_p^i(t+1)
=
\left(
\Theta_p^i+d_iB_iK_i
\right)\chi_i(t)
+d_i e_i(t)
-d_iB_pu_p(t),
\label{eq:relative_dynamics}
\end{equation}
where
$y_p^i(t)\in\mathbb R^{n_x}$ is the relative state between agents
$i$ and $p$ defined in \eqref{eq:relative_safe_set},
$d_i\in\mathbb R^{1\times n_x}$ specifies the constrained
relative-state direction,
$\chi_i(t)\in\mathbb R^{n_{\chi_i}}$ is the augmented state used by
the controller of agent $i$,
$e_i(t)\in\mathbb R^{n_x}$ is the observer-induced state perturbation
defined in \eqref{eq:observer_induced_perturbation}, and
$u_p(t)\in\mathbb R^{n_u}$ is the control input of agent $p$.
The matrix
$\Theta_p^i\in\mathbb R^{1\times n_{\chi_i}}$ is defined as
$
\Theta_p^i
:=
\left[
d_iA_i\ \ 0\ \ \cdots\ \ -d_iA_p\ \ \cdots\ \ 0
\right],
$
where $-d_iA_p$ occupies the block corresponding to $x_p$ in $\chi_i$. 

The remaining relative-safety margin is defined as
\begin{equation}
\bar b_p^i
:=
b_p^i
-\varepsilon_i\|d_i\|_2
-u_{\max}^{p}\|d_iB_p\|_1,
\label{eq:relative_remaining_margin}
\end{equation}
where $b_p^i \in \mathbb{R}_{\ge 0}$ is the prescribed relative-state safety bound in
\eqref{eq:relative_safe_set},
$\varepsilon_i\in \mathbb{R}_{\ge 0}$ is the observer-induced perturbation bound
established in Theorem~\ref{thm:ei_bound}, and
$u_{\max}^{p}\in \mathbb{R}_{\ge 0}$ is the input bound of agent $p$.
\end{definition}

Definition~\ref{def:relative_safety_quantities} characterizes how the
relative safety between agents $i$ and $p$ is affected by the local
closed-loop motion, the observer-induced perturbation, and the input of
agent $p$. Using the lifted augmented-state set and the resulting
remaining safety margin $\bar b_p^i$, we next derive a sufficient
condition for preserving the pairwise relative-state safety constraint
over one sampling step.

\begin{theorem}\label{thm:relative_forward_safety}
Based on Definition~\ref{def:lifted_ellipsoid} and Definition~\ref{def:relative_safety_quantities}, for each
$i\in\mathcal V$ and $p\in\mathcal R_i$, if $\bar b_p^i>0$ and
\begin{equation}
\left(
\Theta_p^i+d_iB_iK_i
\right)
\tilde L_i
\left(
\Theta_p^i+d_iB_iK_i
\right)^\top
\le
(\bar b_p^i)^2,
\label{eq:relative_forward_condition}
\end{equation}
then the relative-state safe set in
\eqref{eq:relative_safe_set} is preserved over one sampling step, i.e., $y_p^i(t+1)\in\mathcal S_p^i,$
for all
$\chi_i(t)\in\widetilde{\mathcal X}_i$,
$e_i(t)\in\Delta(\varepsilon_i)$, and
$\|u_p(t)\|_\infty\le u_{\max}^{p}$.
\end{theorem}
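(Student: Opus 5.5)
The plan is to bound each of the three terms in the relative dynamics \eqref{eq:relative_dynamics} separately and then apply the triangle inequality. Write $w:=(\Theta_p^i+d_iB_iK_i)^\top\in\mathbb R^{n_{\chi_i}}$, so that
\[
|d_i y_p^i(t+1)|\le |w^\top\chi_i(t)|+|d_ie_i(t)|+|d_iB_pu_p(t)|.
\]
The goal is to show that the sum of the three bounds equals $b_p^i$ once the margin definition \eqref{eq:relative_remaining_margin} is used.

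\emph{First term.} Since $\tilde L_i\succ0$ (it is a positive multiple of a block-diagonal matrix of the $L_\cdot=Q_\cdot^{-1}\succ0$), I will use the standard support-function bound for an ellipsoid. Writing $\chi_i=\tilde L_i^{1/2}z$ with $\|z\|\le1$ whenever $\chi_i\in\widetilde{\mathcal X}_i$, Cauchy--Schwarz gives
\[
|w^\top\chi_i|=|(\tilde L_i^{1/2}w)^\top z|\le\sqrt{w^\top\tilde L_iw}.
\]
Condition \eqref{eq:relative_forward_condition} states exactly that $w^\top\tilde L_iw\le(\bar b_p^i)^2$. Because $\bar b_p^i>0$, taking square roots yields $|w^\top\chi_i(t)|\le\bar b_p^i$. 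The positivity hypothesis is needed precisely here, so that the square root is legitimate and the bound is nonvacuous.

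\emph{Second term.} By Cauchy--Schwarz and $e_i(t)\in\Delta(\varepsilon_i)$, we get $|d_ie_i(t)|\le\|d_i\|_2\|e_i(t)\|\le\varepsilon_i\|d_i\|_2$.

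\emph{Third term.} By Hölder's inequality with the $(1,\infty)$ pairing, $|d_iB_pu_p(t)|\le\|d_iB_p\|_1\|u_p(t)\|_\infty\le u_{\max}^p\|d_iB_p\|_1$.

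\emph{Conclusion.} Summing the three bounds and substituting \eqref{eq:relative_remaining_margin} gives $|d_iy_p^i(t+1)|\le b_p^i$, i.e.\ $y_p^i(t+1)\in\mathcal S_p^i$.

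The main point needing care is the first term: the ellipsoidal support-function bound and its link to \eqref{eq:relative_forward_condition}. Beyond that, I only need to confirm that \eqref{eq:relative_dynamics} is consistent with \eqref{eq:perturbed_closed_loop} and \eqref{eq:agent_dynamics}. Subtracting $x_p(t+1)=A_px_p(t)+B_pu_p(t)$ and noting that $x_p$ is a block of $\chi_i$, since $p\in\mathcal R_i$, confirms this. Everything else is routine.
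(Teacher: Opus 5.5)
Your proposal is correct and matches the paper's proof. Both bound the three terms of \eqref{eq:relative_dynamics} separately: the ellipsoidal bound $|(\Theta_p^i+d_iB_iK_i)\chi_i|\le\bar b_p^i$ from \eqref{eq:relative_forward_condition}, Cauchy--Schwarz for $d_ie_i$, and the $(1,\infty)$ H\"older bound for $d_iB_pu_p$; the three bounds are then summed using \eqref{eq:relative_remaining_margin}. One small refinement: the square root of $(\bar b_p^i)^2$ is always defined, so the sign hypothesis is really what gives $\sqrt{(\bar b_p^i)^2}=\bar b_p^i$ rather than $|\bar b_p^i|$, which is what makes the three bounds sum exactly to $b_p^i$.
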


The proof of Theorem~\ref{thm:relative_forward_safety} is presented in the Appendix. 
Theorem~\ref{thm:relative_forward_safety} guarantees that the relative
state remains within its prescribed safe set after accounting for the effect of observer error and the input of agent $p$.
Since $\mathcal D_i\subseteq\widetilde{\mathcal X}_i$, the condition
also applies to all admissible augmented states.

\subsection{Joint LMI-Based Observer-Controller Synthesis}

Having established the local RSI and relative-state safety conditions,
we now combine them with the observer-stability requirement and the
input constraints to formulate a joint LMI-based synthesis problem.

\begin{definition}\label{def:joint_lmi_synthesis}
Based on
Assumption~\ref{ass:observer_stability}, the local RSI conditions in
Theorem~\ref{thm:epsilon_two_condition}, and the relative-state safety
condition in Theorem~\ref{thm:relative_forward_safety}, for each
$j\in\mathcal V$ with $\eta_j>0$, define the decision matrices
$
P_j:=
\operatorname{blkdiag}
\left(
P_j^{r_1},\ldots,P_j^{r_{\eta_j}}
\right)$
and
$Y_j:=
\operatorname{blkdiag}
\left(
Y_j^{r_1},\ldots,Y_j^{r_{\eta_j}}
\right)
$ with $\mathcal O_j = \{r_1,\ldots,r_{\eta_j}\}$, 
where
$
Y_j=P_j\Lambda_j^o.
$
For fixed parameters
$q_j^o\in(0,1)$ and $\gamma_j\ge1$ for all
$j\in\mathcal V$ with $\eta_j>0$, and
$\kappa_i\in(0,1)$ and $\rho_i\ge1$ for all $i\in\mathcal V$,
define the joint
optimization problem
\begin{equation}
\mathrm{OP}:
\min_{\substack{
\{P_j,Y_j\}_{j\in\mathcal V,\eta_j>0}\\
\{L_i,F_i,U_i,\beta_i\}_{i\in\mathcal V},\
\{\tau_{i,\ell}\}_{i\in\mathcal V,\,
\ell=1,\ldots,n_u}}}
-\sum_{i\in\mathcal V}\log\det(L_i)
\label{eq:OP_obj}
\end{equation}
subject to
\begin{equation}
\begin{bmatrix}
q_j^oP_j &
\left(
P_j\mathbf A_j^o
-Y_j\bar M_j
\right)^\top\\
P_j\mathbf A_j^o
-Y_j\bar M_j &
P_j
\end{bmatrix}
\succeq0
\label{eq:OP_observer_1}
\end{equation}
\begin{equation}
I_{\eta_j n_x}
\preceq P_j
\preceq
\gamma_jI_{\eta_j n_x},
\label{eq:OP_observer_2}
\end{equation}
\begin{equation}
\begin{bmatrix}
\kappa_iL_i &
\bar A_i\tilde L_i+B_iF_i\\
\left(
\bar A_i\tilde L_i+B_iF_i
\right)^\top &
\tilde L_i
\end{bmatrix}
\succeq0,
\label{eq:OPi_1}
\end{equation}
\begin{equation}
L_i\succeq\alpha_iI_{n_x},
\label{eq:OPi_2}
\end{equation}
\begin{equation}
c_{i,a}L_ic_{i,a}^\top\le1,
\quad
a=1,\ldots,n_{c_i},
\label{eq:OPi_3}
\end{equation}
\begin{equation}
\begin{bmatrix}
(\bar b_p^i)^2 &
\Theta_p^i\tilde L_i+d_iB_iF_i\\
\left(
\Theta_p^i\tilde L_i+d_iB_iF_i
\right)^\top &
\tilde L_i
\end{bmatrix}
\succeq0,
\quad
\forall p\in\mathcal R_i,
\label{eq:OPi_relative}
\end{equation}
\begin{equation}
\begin{bmatrix}
\frac{1}{2}{u_{\max}^i}^2-\tau_{i,\ell} &
f_{i,\ell}^\top\\
f_{i,\ell} &
\tilde L_i
\end{bmatrix}
\succeq0,
\quad
\forall i\in\mathcal V,\quad
\ell=1,\ldots,n_u,
\label{eq:OPi_4}
\end{equation}
\begin{equation}
\begin{bmatrix}
\tau_{i,\ell} &
f_{i,\ell}^\top\\
f_{i,\ell} &
U_i
\end{bmatrix}
\succeq0
\label{eq:OPi_5}
\end{equation}
\begin{equation}
\delta_i^2U_i
\preceq
\underline{\alpha}_i\tilde L_i,
\label{eq:OPi_6}
\end{equation}
\begin{equation}
\tilde L_i\succeq\beta_iI_{n_{\chi_i}},
\qquad
L_i\preceq\rho_i\beta_iI_{n_x}.
\label{eq:OPi_7}
\end{equation}
Here,
$P_j\in\mathbb R^{\eta_j n_x\times\eta_j n_x}$ is symmetric positive
definite and
$Y_j\in\mathbb R^{\eta_j n_x\times\eta_j n_x}$ is the corresponding
observer decision matrix. Moreover,
$L_i=Q_i^{-1}\in\mathbb R^{n_x\times n_x}$ is symmetric positive
definite,
$F_i\in\mathbb R^{n_u\times n_{\chi_i}}$ is the controller decision
matrix,
$U_i\in\mathbb R^{n_{\chi_i}\times n_{\chi_i}}$ is symmetric positive
semidefinite, $\tau_{i,\ell}\ge0$, and $\beta_i>0$. The matrix
$\tilde L_i$ is the affine expression of the local RSI matrices
defined in \eqref{eq:lifted_assumption_matrix}. Furthermore,
$
F_i:=K_i\tilde L_i,
$
$
f_{i,\ell}:=F_i(\ell,:)^\top,
$
$
c_{i,a}:=C_i(a,:),
$
$
\varepsilon_i:=\sqrt{\kappa_i\rho_i}\,\delta_i,
$
$
\alpha_i:=
\frac{\varepsilon_i^2}
{(1-\sqrt{\kappa_i})^2}
$ and $\underline{\alpha}_i
:=
(m_i+1)
\min_{r\in\{i\}\cup\mathcal R_i}
\alpha_r.$ After solving $\mathrm{OP}$ in \eqref{eq:OP_obj}, the observer and controller gains are
recovered as
$
\Lambda_j^o=P_j^{-1}Y_j
$
and
$
K_i=F_i\tilde L_i^{-1},
$
respectively.
\end{definition}

Constraints \eqref{eq:OP_observer_1}-\eqref{eq:OP_observer_2}
enforce the observer-stability condition in
Assumption~\ref{ass:observer_stability}, which underlies the uniform
observer-error bound in
Theorem~\ref{thm:observer_convergence_bound}.
For the local RSI requirement,
\eqref{eq:OPi_1}-\eqref{eq:OPi_2} enforce the two sufficient
conditions in Theorem~\ref{thm:epsilon_two_condition}, namely the
nominal contraction condition and the perturbation accommodation
condition. Constraint \eqref{eq:OPi_3} further guarantees that the
resulting RSI ellipsoid is contained in the prescribed local safe set. The relative-state safety condition in
Theorem~\ref{thm:relative_forward_safety} is enforced by
\eqref{eq:OPi_relative}. In addition,
\eqref{eq:OPi_4}--\eqref{eq:OPi_6} guarantee satisfaction of the input
constraints under remote-state estimation errors, whereas \eqref{eq:OPi_7} bounds the controller amplification of the
estimation error and thereby determines the perturbation bound
$\varepsilon_i$ used in Theorem~\ref{thm:ei_bound}.

\begin{remark}\label{rem:epsilon_decoupling}
The observer-controller coupling arises because the controller relies on observer-provided state estimates, while the impact of their estimation errors depends on the feedback gains \(K_i\), as shown by
$
e_i(t)=-B_iK_iE_ie_i^o(t).
$
The observer design provides the bound
$\|e_i^o(t)\|\le\delta_i$, whereas
\eqref{eq:OPi_1} and \eqref{eq:OPi_7} imply
$
(B_iK_iE_i)(B_iK_iE_i)^\top
\preceq\kappa_i\rho_iI_{n_x}.
$
Hence, the coupled perturbation is bounded by
$
\varepsilon_i=\sqrt{\kappa_i\rho_i}\,\delta_i.
$
For fixed $q_j^o$, $\gamma_j$, $\kappa_i$, and $\rho_i$,
$\delta_i$ and $\varepsilon_i$ are known before solving the inner
optimization, which enables the observer and controller conditions to
be incorporated into the same convex synthesis problem.
\end{remark}

The following theorem summarizes the main synthesis result and connects
the feasibility of the proposed joint optimization problem to the
overall safety requirement.

\begin{theorem}\label{the:theoremLMI}
For a given design parameter $k$, consider the joint optimization
problem $\mathrm{OP}$ in Definition~\ref{def:joint_lmi_synthesis}.
Suppose that the input-variation and observer-initialization conditions
required in Theorems~\ref{thm:input_observer} and
\ref{thm:observer_convergence_bound} hold. If there exist fixed
$q_j^o\in(0,1)$ and $\gamma_j\ge1$ for all
$j\in\mathcal V$ with $\eta_j>0$, and
$\kappa_i\in(0,1)$ and $\rho_i\ge1$ for all $i\in\mathcal V$, such that
$\bar b_p^i>0$ for all $i\in\mathcal V$ and $p\in\mathcal R_i$, and
$\mathrm{OP}$ is feasible, then the recovered gains
$
\Lambda_j^o=P_j^{-1}Y_j
$
and
$
K_i=F_i\tilde L_i^{-1}
$
satisfy the observer-stability, local $\varepsilon_i$-RSI,
relative-state safety, and input-safety conditions established above.

Define the overall safe set as
\begin{equation}
\mathcal X_{\mathrm{safe}}
:=
\left\{
x\in\mathbb R^{Nn_x}
\middle|
\begin{array}{ll}
x_i\in\mathcal X_{\mathrm{safe}}^i,
&
\forall i\in\mathcal V,
\\
x_i-x_p\in\mathcal S_p^i,
&
\forall i\in\mathcal V,\ 
\forall p\in\mathcal R_i
\end{array}
\right\}.
\label{eq:overall_safe_set}
\end{equation}
Then, for every $x(0)\in\mathcal X_{\mathrm{safe}}$, the resulting
closed-loop system satisfies
$x(t)\in\mathcal X_{\mathrm{safe}}$ and
$\|u_i(t)\|_\infty\le u_{\max}^i$
for all $i\in\mathcal V$ and $t\in\mathbb N$.
Consequently, $\mathcal X_{\mathrm{safe}}$ is an RSI set for the
closed-loop multi-agent system.
\end{theorem}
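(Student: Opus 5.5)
We argue in two stages. First, each LMI of $\mathrm{OP}$ implies the corresponding condition established earlier. Second, an induction on $t$ shows invariance of $\mathcal X_{\mathrm{safe}}$ and admissibility of the inputs.

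\emph{Observer stability.} Set $\Lambda_j^o=P_j^{-1}Y_j$. The block-diagonal structure of $P_j$ and $Y_j$ makes $\Lambda_j^o$ of the form \eqref{eq:stacked_observer_gain}. Since $P_j\mathbf A_j^o-Y_j\bar M_j=P_j\Psi_j$, a Schur complement of \eqref{eq:OP_observer_1} gives $q_j^oP_j\succeq\Psi_j^\top P_jP_j^{-1}P_j\Psi_j=\Psi_j^\top P_j\Psi_j$. Together with \eqref{eq:OP_observer_2}, this is exactly Assumption~\ref{ass:observer_stability}. Under the assumed input-variation and initialization conditions, Theorems~\ref{thm:input_observer}, \ref{thm:observer_convergence_bound} and \ref{thm:ei_bound} then give $\|e_i^o(t)\|\le\delta_i$ for all $t$.

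\emph{Perturbation bound.} Set $K_i=F_i\tilde L_i^{-1}$. A Schur complement of \eqref{eq:OPi_1} gives $(\bar A_i+B_iK_i)\tilde L_i(\bar A_i+B_iK_i)^\top\preceq\kappa_iL_i$. Pre- and post-multiplying by $Q_i^{1/2}$ shows that $(\bar A_i+B_iK_i)\chi_i$ lies in $\{\zeta:\zeta^\top Q_i\zeta\le\kappa_i\}$ whenever $\chi_i\in\widetilde{\mathcal X}_i\supseteq\mathcal D_i$; this is condition a) of Theorem~\ref{thm:epsilon_two_condition}. Restricting the same matrix inequality to the $E_i$-block columns, using $\bar A_iE_i=0$, gives $B_iK_iE_i(E_i^\top\tilde L_iE_i)E_i^\top K_i^\top B_i^\top\preceq\kappa_iL_i$. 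We then combine $E_i^\top\tilde L_iE_i\succeq\beta_iI$ with $L_i\preceq\rho_i\beta_iI$ from \eqref{eq:OPi_7} to obtain $(B_iK_iE_i)(B_iK_iE_i)^\top\preceq\kappa_i\rho_iI$, as in Remark~\ref{rem:epsilon_decoupling}. By Theorem~\ref{thm:ei_bound} this yields $\|e_i(t)\|\le\varepsilon_i$ with $g_i=\sqrt{\kappa_i\rho_i}$.

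\emph{Condition b).} Write $\|\cdot\|_{Q_i}$ for the $Q_i$-norm. For $\zeta^\top Q_i\zeta\le\kappa_i$ and $\|\tilde e\|\le\varepsilon_i$,
\[
\|\zeta+\tilde e\|_{Q_i}\le\sqrt{\kappa_i}+\sqrt{\lambda_{\max}(Q_i)}\,\varepsilon_i\le\sqrt{\kappa_i}+\varepsilon_i/\sqrt{\alpha_i}=1,
\]
where the middle step uses $L_i\succeq\alpha_iI$, i.e.\ \eqref{eq:OPi_2}, and the last step uses the definition of $\alpha_i$. Theorem~\ref{thm:epsilon_two_condition} therefore gives the $\varepsilon_i$-RSI property.

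\emph{Set inclusion and relative safety.} A support-function argument shows that \eqref{eq:OPi_3} gives $\mathcal X_{\mathrm{safe}}^i\subseteq\mathcal X^i$: for $x^\top Q_ix\le1$ we have $c_{i,a}x\le\sqrt{c_{i,a}L_ic_{i,a}^\top}\le1$. A Schur complement of \eqref{eq:OPi_relative} gives \eqref{eq:relative_forward_condition}, so Theorem~\ref{thm:relative_forward_safety} applies.

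\emph{Input constraints (main obstacle).} We must bound $u_i=K_i\chi_i-K_iE_ie_i^o$ row by row. For row $\ell$, \eqref{eq:OPi_4} gives $(K_i(\ell,:)\chi_i)^2\le f_{i,\ell}^\top\tilde L_i^{-1}f_{i,\ell}\le\tfrac12{u_{\max}^i}^2-\tau_{i,\ell}$ for $\chi_i\in\widetilde{\mathcal X}_i$. Next, \eqref{eq:OPi_5} gives $f_{i,\ell}^\top U_i^{-1}f_{i,\ell}\le\tau_{i,\ell}$, assuming $U_i\succ0$ or using a pseudo-inverse. From \eqref{eq:OPi_6} we get $\tilde L_i^{-1}\preceq(\underline\alpha_i/\delta_i^2)U_i^{-1}$, so the error term satisfies
\[
(f_{i,\ell}^\top\tilde L_i^{-1}E_ie_i^o)^2\le\delta_i^2\,f_{i,\ell}^\top\tilde L_i^{-1}E_iE_i^\top\tilde L_i^{-1}f_{i,\ell}.
\]
The plan is to bound $E_iE_i^\top\preceq\tilde L_i/\underline\alpha_i$, using $\tilde L_i\succeq(m_i+1)\min_r\alpha_r I$ from \eqref{eq:OPi_2}. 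This gives $(K_iE_ie_i^o)_\ell^2\le\tau_{i,\ell}$. Finally, $(a+b)^2\le2a^2+2b^2$ yields $|u_{i,\ell}|^2\le{u_{\max}^i}^2$, with the factor $\tfrac12$ absorbing the cross term. This bookkeeping, matching the constants $\underline\alpha_i$ and $\tfrac12$, is where we expect the real work to lie, and we would verify it carefully.

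\emph{Induction.} Suppose $x(t)\in\mathcal X_{\mathrm{safe}}$. Then each $\chi_i(t)\in\mathcal D_i\subseteq\widetilde{\mathcal X}_i$. The input step gives $u(t)\in\mathcal U$, hence $\|u_p(t)\|_\infty\le u_{\max}^p$. The RSI property gives $x_i(t+1)\in\mathcal X_{\mathrm{safe}}^i$, and Theorem~\ref{thm:relative_forward_safety} gives $y_p^i(t+1)\in\mathcal S_p^i$. Hence $x(t+1)\in\mathcal X_{\mathrm{safe}}$. Agents with $\eta_i=0$ are handled identically with $e_i\equiv0$.
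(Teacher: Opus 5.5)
Your proposal is correct and follows the paper's proof essentially step for step. You take Schur complements of each LMI to recover Assumption~\ref{ass:observer_stability}, conditions a) and b) of Theorem~\ref{thm:epsilon_two_condition}, the bound $(B_iK_iE_i)(B_iK_iE_i)^\top\preceq\kappa_i\rho_iI_{n_x}$, condition \eqref{eq:relative_forward_condition}, and the row-wise input bound via $(a+b)^2\le 2a^2+2b^2$, and then induct on $t$.

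The only real difference is in the input step. You invert $U_i$ to write $\tilde L_i^{-1}\preceq(\underline{\alpha}_i/\delta_i^2)U_i^{-1}$, which needs $U_i\succ0$. With a pseudo-inverse that matrix inequality fails whenever $U_i$ is singular, although the scalar bound you need still holds because $f_{i,\ell}\in\operatorname{range}(U_i)$. The paper instead uses $f_{i,\ell}f_{i,\ell}^\top\preceq\tau_{i,\ell}U_i$ directly and never needs $U_i$ to be invertible.
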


The proof is presented in the Appendix.
Theorem~\ref{the:theoremLMI} establishes that feasibility of the joint
synthesis problem guarantees recursive invariance of the overall safe
set $\mathcal X_{\mathrm{safe}}$ while satisfying the prescribed input
constraints. This result completes the proposed communication-aware
safe control synthesis framework.


\section{SIMULATION}

\begin{table}[t]
\centering
\caption{Selected outer parameters and observer-induced state-perturbation bounds.}
\label{tab:global_controller_parameters}
\begin{tabular}{c c c c c c c}
\hline
Agent & $\kappa_i$ & $\rho_i$ & $\gamma_i$ & $q_i^o$
& $\delta_i$ & $\varepsilon_i$\\
\hline
1 & 0.95 & 1.00 & 1.00 & 0.14 & 0.0245 & 0.0239\\
2 & 0.68 & 1.00 & 1.00 & 0.10 & 0.4538 & 0.3742\\
3 & 0.68 & 1.00 & 1.00 & 0.10 & 0.4538 & 0.3742\\
4 & 0.95 & 1.00 & 1.00 & 0.14 & 0.0245 & 0.0239\\
5 & 0.59 & 1.00 & 1.00 & 0.10 & 0.6418 & 0.4930\\
\hline
\end{tabular}
\end{table}

In this section, a case study is provided to illustrate the proposed
framework. Consider a network of five agents indexed by
$\mathcal V=\{1,2,3,4,5\}$, where $i\in\mathcal V$ denotes the agent
index. The agents are interconnected through the communication topology
$
1-2-3-4,\ 2-5,\ 3-5.
$
For each $i \in \mathcal{V}$, define
$
d_{1,i}=
\begin{bmatrix}
\cos\theta_i&\sin\theta_i
\end{bmatrix}
$
and
$
d_{2,i}=
\begin{bmatrix}
-\sin\theta_i&\cos\theta_i
\end{bmatrix}.
$
For $i\in\{1,4\}$,
$A_i=
\begin{bmatrix}
0.1639&0.1387\\
0.3556&0.4797
\end{bmatrix},
B_i=
\begin{bmatrix}
0.1513\\
0.3529
\end{bmatrix},$ $
(\theta_i,a_i,r_i)=(30^\circ,2.4,3.4),
$
whereas for $i\in\{2,3,5\}$
$
A_i=
\begin{bmatrix}
0.4248&-0.1173\\
0.1554&0.0803
\end{bmatrix},
B_i=
\begin{bmatrix}
0.3549\\
0.1062
\end{bmatrix}, $
$
(\theta_i,a_i,r_i)=(-20^\circ,2.2,3.6)
$.
The state constraint matrix is
$
C_i=[d_{1,i}/a_i \ -d_{1,i}/a_i \ d_{2,i}/r_i \ -d_{2,i}/r_i]^\top$.
Thus, each safe set is a rotated rectangle with half-widths $a_i$ and
$r_i$. The input bound is set to $u_{\max}^i=4$ for all
$i\in\mathcal V$. The relative-state safety requirement is $
|d_{1,i}(x_i-x_p)|\le b_p^i, p\in\mathcal R_i,$ where $b_p^i=6.10$ for $i\in\{1,4\}$ and
$b_p^i=6.75$ for $i\in\{2,3,5\}$.

\begin{figure}[t]
    \centering   \includegraphics[width=0.83\linewidth]{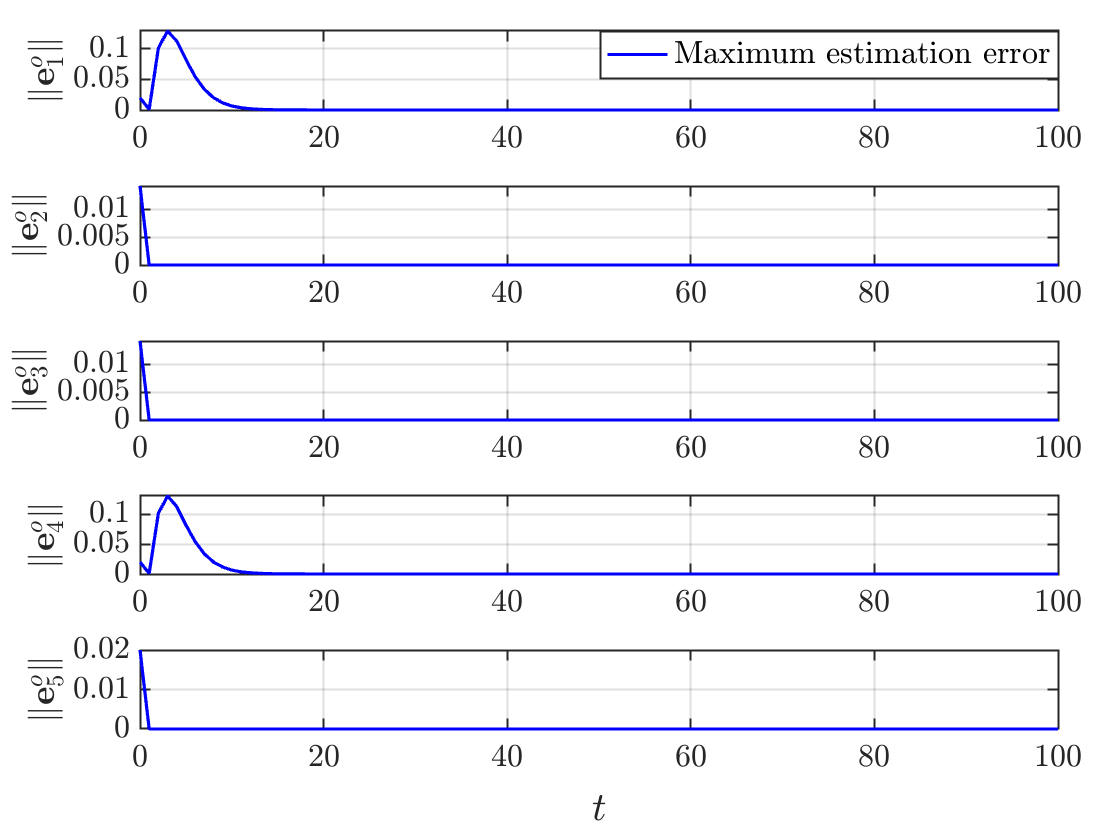}
    \caption{Observer estimation errors of all agents.}
    \label{fig:error_bound}
\end{figure}

\begin{figure}[t]
    \centering  \includegraphics[width=0.9\linewidth]{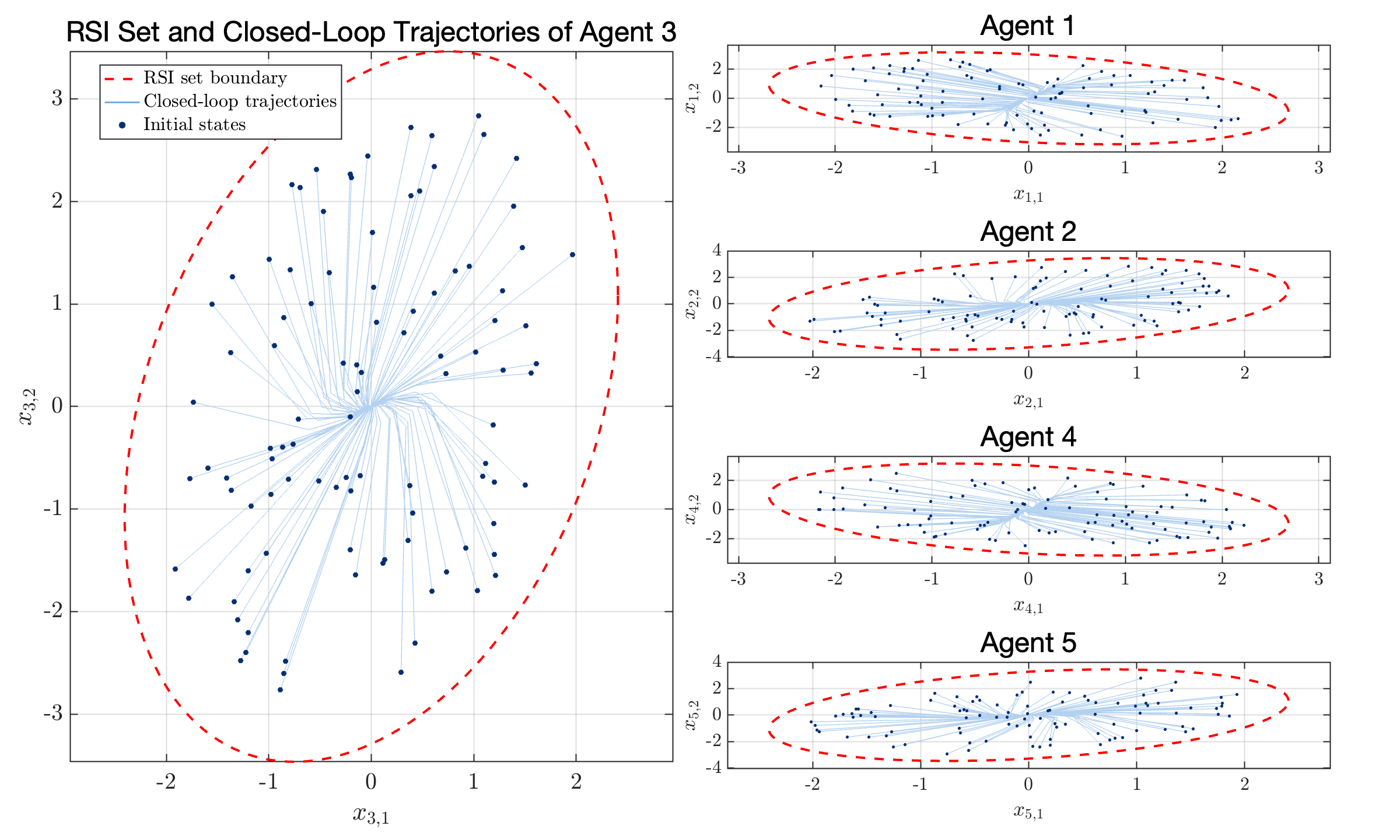}
    \caption{Local RSI sets and closed-loop trajectories of all agents.}
    \label{fig:rsi_projection}
\end{figure}

\begin{figure}[t]
    \centering   \includegraphics[width=0.83\linewidth]{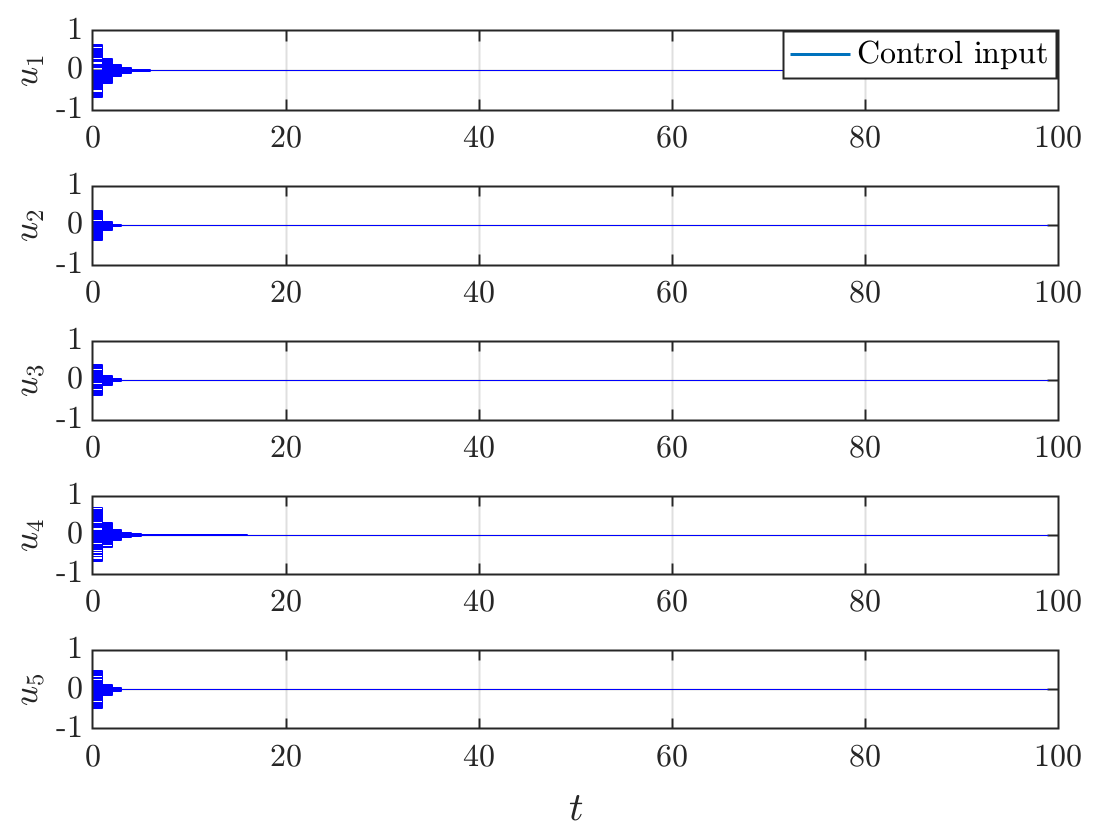}
    \caption{Control input trajectories of all agents.}
    \label{fig:control_input}
\end{figure}

\begin{figure}[t]
    \centering    \includegraphics[width=0.83\linewidth]{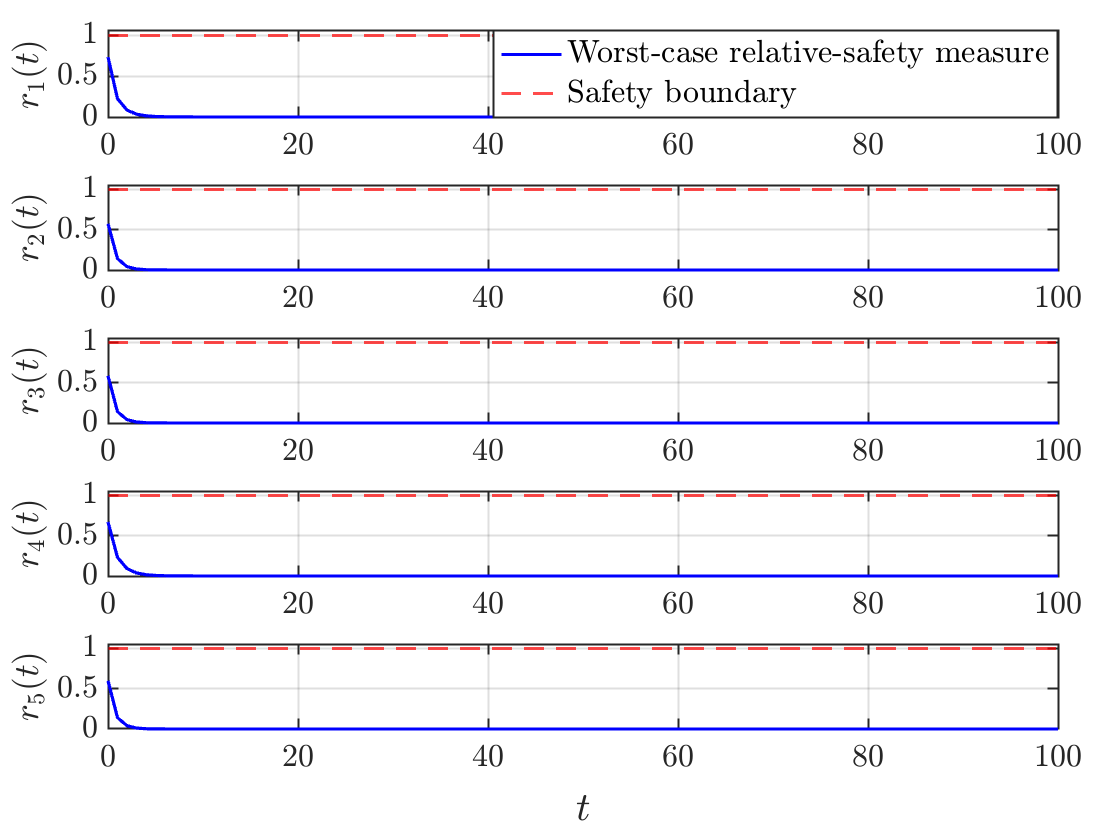}
    \caption{Normalized relative-state safety measures of all agents.}
    \label{fig:relative_safety}
\end{figure}

The observer range is set to $k=2$.
The remote-state sets are
$\mathcal N_1^{k\text{-hop}}=\{3,5\}$,
$\mathcal N_2^{k\text{-hop}}=\{4\}$,
$\mathcal N_3^{k\text{-hop}}=\{1\}$,
$\mathcal N_4^{k\text{-hop}}=\{2,5\}$, and
$\mathcal N_5^{k\text{-hop}}=\{1,4\}$.
The communication matrices are
$M_1=M_4=
\begin{bmatrix}
2&-1\\
-1&2
\end{bmatrix},
M_2=M_3=1,
M_5=I_2.$
The initial error of each maintained state estimate is set to
$[0.01\ -0.01]^\top$. The corrected input estimates are initialized such that $\|\tilde{\mathbf u}_j^o(0)\|=0$ for all $j \in \mathcal{V}$.
The prescribed input-increment bounds are
$l_1=l_4=0.50$ and
$l_2=l_3=l_5=1.00$, and their validity is verified offline after controller synthesis.

The corresponding values of
$\delta_i$, $\varepsilon_i$, and $\alpha_i$ are computed before solving
the joint convex optimization problem. The problem $\mathrm{OP}$ is modeled in
YALMIP~\cite{YALMIP} and solved using MOSEK~\cite{MOSEK}.
The selected outer parameters and observer-induced state-perturbation
bounds are summarized in
Table~\ref{tab:global_controller_parameters}.

For conciseness, we report the synthesis results associated with
agent 3. The joint convex optimization problem gives
\begin{align*}
P_3 = I_2,
\qquad
\Lambda_3^o=
\begin{bmatrix}
0.4248 & -0.1173\\ 
0.1554 & 0.0803
\end{bmatrix}.
\end{align*}
The recovered controller gain and RSI matrix are
\[
\begin{split}
K_3={}&[-0.2160 \ 0.0739 \ 0.0058 \ -0.0022 \ 0.0002 \ -0.0002 \\ &0.0058 \ -0.0022 \ 0.0002 \ -0.0002],\\
L_3={}&
\begin{bmatrix}
5.7900 & 2.6098\\
2.6098 & 12.0100
\end{bmatrix}.
\end{split}
\]

For the closed-loop verification, 100 initial conditions are sampled
from the local RSI sets while satisfying the prescribed relative-state
safety constraints, and the system is simulated for $H=100$ time
steps. All five agents apply their synthesized controllers.

Figure~\ref{fig:error_bound} shows the distributed observer estimation
errors introduced in~\eqref{eq:error_dynamics}.
Figure~\ref{fig:rsi_projection} shows the local RSI sets and closed-loop
trajectories of all agents. Every trajectory remains in its corresponding
local RSI set. Figure~\ref{fig:control_input} presents the
control input trajectories of all agents, all of which satisfy
$\|u_i(t)\|_\infty\le4$.
Figure~\ref{fig:relative_safety} shows the normalized relative-state
safety measures $
\max_{p\in\mathcal R_i}
\frac{|d_{1,i}(x_i(t)-x_p(t))|}{b_p^i},$
which remain below one for all agents throughout the simulation.
These results illustrate the local RSI conditions, the relative-state safety requirements, and the input constraints.

\section{CONCLUSION}

This paper studied communication-aware safe controller synthesis for
discrete-time linear multi-agent systems.
A distributed $k$-hop observer was developed to reconstruct unavailable
remote states, and uniform observer-error bounds were derived to
characterize the resulting observer-induced state perturbations.
The neighboring and remote states involved in each local controller were
incorporated through an augmented admissible-state domain, based on which
local $\varepsilon_i$-RSI and relative-state safety conditions were
established. An LMI-based optimization framework was developed to jointly
synthesize the distributed observers, safe state-feedback controllers,
and local RSI sets while accounting for the observer--controller
coupling. The proposed framework provides a link between
communication-aware estimation and safety-guaranteed control for
multi-agent systems.




\section*{APPENDIX}

\noindent\textbf{\textit{Proof of Theorem~\ref{M_kc}.}}
Since $D_j$ is a Laplacian matrix and $H_j$ is diagonal with
nonnegative entries, both are positive semidefinite. If
$a^\top M_ja=0$, then $a^\top D_ja=0$ and $a^\top H_ja=0$.
The first equality implies that $a$ is constant on each connected
component of the subgraph induced by $\mathcal O_j$, whereas the second
implies that every component containing a positive diagonal entry of
$H_j$ has zero constant value. By
Assumption~\ref{ass:graph_condition}, every connected component contains
such a node. Hence $a=0$, so $M_j\succ0$. Since
$\bar M_j=M_j\otimes I_{n_x}$, it follows that $\bar M_j\succ0$.
\hfill$\blacksquare$

\noindent\textbf{\textit{Proof of Theorem~\ref{thm:input_observer}.}}
From \eqref{eq:input_observer_correction} and
\eqref{eq:corrected_input_estimate}, the stacked input-estimation
error satisfies
$
\tilde{\mathbf u}_j^o(t+1)
=
((I_{\eta_j}-s_jM_j)\otimes I_{n_u})
\left(
\tilde{\mathbf u}_j^o(t)
+
\mathbf1_{\eta_j}\otimes
(u_j(t+1)-u_j(t))
\right).
$
Hence,
$
\|\tilde{\mathbf u}_j^o(t+1)\|
\le
\sigma_j\|\tilde{\mathbf u}_j^o(t)\|
+
\sigma_j\sqrt{\eta_j}l_j.
$
Since $\nu_j\ge\bar u_j^o$ and
$\nu_j\ge
\sigma_j\sqrt{\eta_j}l_j/(1-\sigma_j)$,
induction gives
$\|\tilde{\mathbf u}_j^o(t)\|\le\nu_j$
for all $t\in\mathbb N$.
\hfill$\blacksquare$

\noindent\textbf{\textit{Proof of Theorem~\ref{thm:observer_convergence_bound}.}}
From \eqref{eq:lyapunov_condition},
$\|\Psi_j^t\|
\le
\sqrt{\lambda_{\max}(P_j)/\lambda_{\min}(P_j)}
(q_j^o)^{t/2}$.
Applying this bound to \eqref{eq:error_dynamics} and using
$\|\tilde{\mathbf u}_j^o(t)\|\le\nu_j$ gives
\eqref{eq:error_bound} after evaluating the geometric sum.
Since \eqref{eq:observer_P_bounds} implies
$\lambda_{\max}(P_j)/\lambda_{\min}(P_j)\le\gamma_j$,
\eqref{eq:uniform_observer_error_bound} follows. 
\hfill$\blacksquare$

\noindent\textbf{\textit{Proof of Theorem~\ref{thm:ei_bound}.}}
For each $j\in\mathcal N_i^{k\text{-hop}}$,
$x_j(t)-\hat x_j^i(t)$ is one block of $\mathbf e_j^o(t)$.
Therefore,
$\|x_j(t)-\hat x_j^i(t)\|\le\bar e_j^o$.
By the definition of $e_i^o(t)$,
$\|e_i^o(t)\|^2
\le\sum_{j\in\mathcal N_i^{k\text{-hop}}}(\bar e_j^o)^2
=\delta_i^2$.
Furthermore, \eqref{eq:observer_induced_perturbation} gives
$e_i(t)=-B_iK_iE_ie_i^o(t)$. If
$(B_iK_iE_i)(B_iK_iE_i)^\top\preceq g_i^2I_{n_x}$, then
$\|B_iK_iE_i\|\le g_i$, and hence
$\|e_i(t)\|\le g_i\delta_i=\varepsilon_i$.
\hfill$\blacksquare$

\noindent\textbf{\textit{Proof of Theorem~\ref{thm:relative_forward_safety}.}}
From \eqref{eq:relative_dynamics},
$d_iy_p^i(t+1)
=(\Theta_p^i+d_iB_iK_i)\chi_i(t)
+d_ie_i(t)-d_iB_pu_p(t)$.
For $\chi_i(t)\in\widetilde{\mathcal X}_i$,
\eqref{eq:relative_forward_condition} implies
$|(\Theta_p^i+d_iB_iK_i)\chi_i(t)|\le\bar b_p^i$.
Moreover,
$|d_ie_i(t)|\le\varepsilon_i\|d_i\|_2$ and
$|d_iB_pu_p(t)|\le u_{\max}^p\|d_iB_p\|_1$.
Hence, by \eqref{eq:relative_remaining_margin},
$|d_iy_p^i(t+1)|\le b_p^i$, proving
$y_p^i(t+1)\in\mathcal S_p^i$.
\hfill$\blacksquare$

\noindent\textbf{\textit{Proof of Theorem~\ref{the:theoremLMI}.}}
Since $P_j$ and $Y_j$ are block diagonal,
$\Lambda_j^o=P_j^{-1}Y_j$ has the required distributed structure.
By the Schur complement, \eqref{eq:OP_observer_1} yields
$\Psi_j^\top P_j\Psi_j\preceq q_j^oP_j$; together with the bounds on
$P_j$, Assumption~\ref{ass:observer_stability} is satisfied.
Similarly, using $F_i=K_i\tilde L_i$, the Schur complement of
\eqref{eq:OPi_1} gives
$(\bar A_i+B_iK_i)\tilde L_i
(\bar A_i+B_iK_i)^\top\preceq\kappa_iL_i$.
Since $\mathcal D_i\subseteq\widetilde{\mathcal X}_i$, this establishes
condition \textbf{a)} of Theorem~\ref{thm:epsilon_two_condition}.
Moreover, \eqref{eq:OPi_2} and the definition of $\alpha_i$ imply
$\|e_i\|_{Q_i}\le1-\sqrt{\kappa_i}$ for every
$e_i\in\Delta(\varepsilon_i)$. Thus, whenever
$\|\zeta_i\|_{Q_i}\le\sqrt{\kappa_i}$,
$\|\zeta_i+e_i\|_{Q_i}\le1$, which establishes condition \textbf{b)}.
Hence $\mathcal X_{\mathrm{safe}}^i$ is an
$\varepsilon_i$-RSI set. Constraint \eqref{eq:OPi_3} further guarantees
$\mathcal X_{\mathrm{safe}}^i\subseteq\mathcal X^i$.
By \eqref{eq:OPi_7} and Remark~\ref{rem:epsilon_decoupling},
$(B_iK_iE_i)(B_iK_iE_i)^\top
\preceq\kappa_i\rho_iI_{n_x}$, so that
$\|e_i(t)\|\le\varepsilon_i
=\sqrt{\kappa_i\rho_i}\delta_i$.
Moreover, the Schur complement of \eqref{eq:OPi_relative} gives
\eqref{eq:relative_forward_condition}; hence
Theorem~\ref{thm:relative_forward_safety} guarantees the relative-state
constraint.
It remains to verify the input constraint. Let
$k_{i,\ell}:=K_i(\ell,:)$, so that
$f_{i,\ell}=\tilde L_i k_{i,\ell}^\top$.
From \eqref{eq:OPi_4},
$k_{i,\ell}\tilde L_i k_{i,\ell}^\top
\le\frac12(u_{\max}^i)^2-\tau_{i,\ell}$.
Furthermore, \eqref{eq:OPi_5} implies
$f_{i,\ell}f_{i,\ell}^\top\preceq\tau_{i,\ell}U_i$. Since $L_r\succeq\alpha_rI$ for
$r\in\{i\}\cup\mathcal R_i$, one has
$\tilde L_i\succeq\underline{\alpha}_iI$.
Together with \eqref{eq:OPi_6}, for
$v_i:=E_ie_i^o$ with $\|v_i\|\le\delta_i$ one obtains
$|k_{i,\ell}v_i|^2\le\tau_{i,\ell}$.
Since
$u_{i,\ell}=k_{i,\ell}\chi_i-k_{i,\ell}v_i$,
$|u_{i,\ell}|^2
\le2|k_{i,\ell}\chi_i|^2+2|k_{i,\ell}v_i|^2
\le(u_{\max}^i)^2$.
Thus $\|u_i(t)\|_\infty\le u_{\max}^i$.
Given the prescribed physical and observer initial conditions, the above relative safety and the input constraint can be applied recursively for all
$t\in\mathbb N$.
\hfill$\blacksquare$

\end{document}